  \providecommand\BibTeX{{%
    \normalfont B\kern-0.5em{\scshape i\kern-0.25em b}\kern-0.8em\TeX}}}
\renewcommand\footnotetextcopyrightpermission[1]{} % removes footnote with conference information in first column
\author{Conor McMenamin}
\affiliation{
\institution{Universitat Pompeu Fabra}
  \city{Barcelona}
  \country{Spain}}
\email{conor.mcmenamin@upf.edu}
\author{Vanesa Daza}
\affiliation{
  \institution{Universitat Pompeu Fabra}
  \city{Barcelona, Spain}\\
  \country{CYBERCAT - Center for Cybersecurity Research of Catalonia}
  }
\email{vanesa.daza@upf.edu}
\author{Matteo Pontecorvi}
\affiliation{
\institution{NOKIA Bell Labs}
  \city{Nozay}
  \country{France}}
\email{matteo.pontecorvi@nokia.com}
\newcommand\With{\textbf{with}}
\newcommand\WhileLocal{\textbf{while}}
\newcommand\From{\textbf{from}}
\newcommand\Broadcast{\textbf{broadcast}}
\theoremstyle{definition}
\newtheorem{theorem}{Theorem}[section]
\newtheorem{lemma}[theorem]{Lemma}
\newtheorem{definition}[theorem]{Definition}
\newtheorem{corollary}[theorem]{Corollary}
\newtheorem{observation}[theorem]{Observation}
\newtheorem{notation}[theorem]{Notation}
\newtheorem{remark}[theorem]{Remark}
\newcommand{\yes}{\ding{51}}
\newcommand{\no}{\ding{55}}
\newcommand{\action}{x}
\newcommand{\actionSet}{X}
\newcommand{\adversary}{\mathcal{A}}
\newcommand{\adversarialBound}{\alpha}
\newcommand{\block}{B}
\newcommand{\blockchain}{\mathcal{C}}
\newcommand{\blockchainHeight}{H}
\newcommand{\deliverTime}{r_{deliver}}
\newcommand{\deviationProofs}{\textit{DevProofs}}
\newcommand{\playerDeviationProofs}{\deviationProofs_i}
\newcommand{\epoch}{\textit{epoch}}
\newcommand{\GST}{\textit{GSR}}
\newcommand{\heightVariable}{h}
\newcommand{\initialisationHeight}{1}
\newcommand{\initialisationRound}{1}
\newcommand{\initialisationEpoch}{1}
\newcommand{\notStrictlyDominated}{\text{NSD}}
\newcommand{\numByzantines}{f}
\newcommand{\numPlayers}{n}
\newcommand{\ourBAR}{\text{ByRa}}
\newcommand{\ourSMR}{\ourBAR\text{ SMR}}
\newcommand{\partialSyncAdvLimit}{\frac{1}{3}}
\newcommand{\partialSyncMajority}{\frac{2}{3}}
\newcommand{\player}{P}
\newcommand{\protocol}{\Pi}
\newcommand{\protocolName}{\text{Tenderstake}}
\newcommand{\proofProposal}{\textit{proofProposal}}
\newcommand{\removeSlashers}{\textit{adjustForSlashing}}
\newcommand{\reward}{\textit{Reward}}
\newcommand{\rewardInitial}{\textit{Genesis.reward}()}
\newcommand{\round}{r}
\newcommand{\roundprime}{{r'}}
\newcommand{\roundprimeprime}{{r''}}
\newcommand{\secParam}{\kappa}
\newcommand{\share}{s}
\newcommand{\shareSet}{\textit{Shares}}
\newcommand{\SICIE}{\text{SINCE}}
\newcommand{\slashedShare}{\textit{slashedShare}}
\newcommand{\SMRUpdate}{V}
\newcommand{\stake}{\textit{Stake}}
\newcommand{\step}{\textit{step}}
\newcommand{\strategy}{\textit{str}}
\newcommand{\strategySet}{\textit{Str}}
\newcommand{\strategySeti}{\strategySet_i}
\newcommand{\strictlyDominates}{>_u}
\newcommand{\sendTime}{r_{send}}
\newcommand{\timeout}{\tau}
\newcommand{\timeouti}{\timeout_i}
\newcommand{\type}{t}
\newcommand{\typeSet}{T}
\newcommand{\utility}{u}
\newcommand{\val}{v}
\newcommand{\actioni}{\action_i}
\newcommand{\actionSeti}{\actionSet_i}
\newcommand{\approxOne}{\approxOneValue}
\newcommand{\approxOneValue}{1- \ \negligible}
\newcommand{\blockGenesis}{\block^\initialisationHeight}
\newcommand{\blockH}{\block^\blockchainHeight}
\newcommand{\equivR}{\equiv^\roundCurrent}
\newcommand{\expectedUtilityiRound}{\overline{\utility}^\roundprime_i}
\newcommand{\negligible}{\textit{negl}(\kappa)}
\newcommand{\notequiv}{\cancel{\equiv}}
\newcommand{\playerIndices}{\{ 1,...,\numPlayers\}}
\newcommand{\playeri}{\player_i}
\newcommand{\playerReward}{\reward_i}
\newcommand{\playerSetDescription}{\{\player_1,...,\player_\numPlayers\}}
\newcommand{\playerShares}{\shareSet_i}
\newcommand{\playerStake}{\stake_i}
\newcommand{\poly}{\text{polynomial in }\secParam}
\newcommand{\roundCurrent}{\round}
\newcommand{\shareH}{\share^\blockchainHeight}
\newcommand{\shareHi}{\shareH_i}
\newcommand{\sharei}{\share_i}
\newcommand{\stakeH}{\stake^\blockchainHeight}
\newcommand{\strategyi}{\strategy_i}
\newcommand{\strategyProtocol}{\strategy_\Pi}
\newcommand{\strategySetNonDominatedi}{\strategySet^\notStrictlyDominated_i}
\newcommand{\typei}{\type_i}
\newcommand{\typeir}{\typei^\roundCurrent}
\newcommand{\typeSeti}{\typeSet_i}
\newcommand{\typeSetir}{\typeSeti^\roundCurrent}
\newcommand{\utilityi}{\utility_i}
\newcommand{\utilityir}{\utilityi^\round}
\newcommand\nnfootnote[1]{%
  \begin{NoHyper}
  \renewcommand\thefootnote{}\footnote{#1}%
  \addtocounter{footnote}{-1}%
  \end{NoHyper}
}
\begin{document}

\title{Achieving State Machine Replication without Honest Players}

\begin{abstract}
  Existing standards for player characterisation in tokenised state machine replication protocols depend on honest players who will always follow the protocol, regardless of possible token increases for deviating. Given the ever-increasing market capitalisation of these tokenised protocols, honesty is becoming more expensive and more unrealistic. As such, this out-dated player characterisation must be removed to provide true guarantees of safety and liveness in a major stride towards universal trust in state machine replication protocols and a new scale of adoption. As all current state machine replication protocols are built on these legacy standards, it is imperative that a new player model is identified and utilised to reflect the true nature of players in tokenised protocols, now and into the future.
  
  To this effect, we propose the $\ourBAR$ player model for state machine replication protocols.  In the $\ourBAR$ model, players either attempt to maximise their tokenised rewards, or behave adversarially. This merges the  fields of game theory and distributed systems, an intersection in which tokenised state machine replication protocols exist, but on which little formalisation has been carried out. In the $\ourBAR$ model, we identify the properties of strong incentive compatibility in expectation and fairness that all protocols must satisfy in order to achieve state machine replication. We then provide $\protocolName$, a protocol which provably satisfies these properties, and by doing so, achieves state machine replication in the $\ourBAR$ model.

\end{abstract}

%%
%% The code below is generated by the tool at http://dl.acm.org/ccs.cfm.
%% Please copy and paste the code instead of the example below.
%%
\begin{CCSXML}
<ccs2012>
   <concept>
       <concept_id>10003752.10010070.10010099.10010100</concept_id>
       <concept_desc>Theory of computation~Algorithmic game theory</concept_desc>
       <concept_significance>500</concept_significance>
       </concept>
    <concept>
        <concept_id>10002978.10003006.10003013</concept_id>
        <concept_desc>Security and privacy~Distributed systems security</concept_desc>
        <concept_significance>500</concept_significance>
        </concept>
 </ccs2012>
\end{CCSXML}

\ccsdesc[500]{Security and privacy~Distributed systems security}
\ccsdesc[500]{Theory of computation~Algorithmic game theory}

\keywords{Blockchain, State Machine Replication, Game Theory, Incentives, Distributed Systems}

\maketitle

\nnfootnote{\begin{minipage}{0.06\textwidth}
    \includegraphics[width=\linewidth]{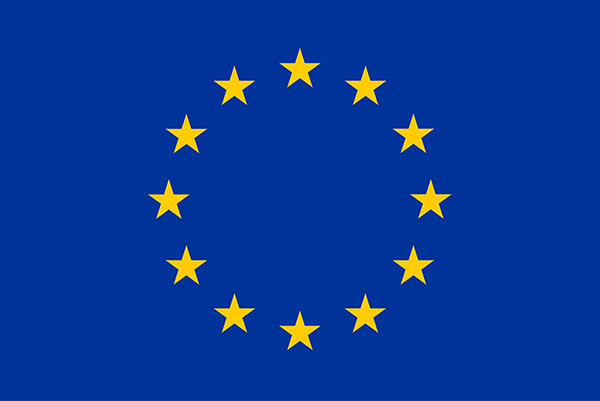}
    \end{minipage}%
    \hfill%
    \begin{minipage}{0.4\textwidth} This Technical Report is part of a project that has received funding from the European Union's Horizon 2020 research and innovation programme under grant agreement number 814284 \end{minipage}} 
\settopmatter{printfolios=true}

\section{Introduction}\label{sec:intro}

Current state machine replication (SMR) protocols, a subset of which being blockchain protocols, depend on the existence of altruistic players who ignore token changes and honestly follow the protocol. If a player can deviate from a protocol to increase their tokens with no perceived effect on safety and liveness, it must be assumed that every such individual will choose to do this. In Flash Boys 2.0 \cite{FlashBoys2.0} and subsequent work\footnote{\url{https://github.com/flashbots/pm} Accessed: 25/05/2021}, it is demonstrated that these deviation opportunities are rampant in Ethereum, and that players are actively availing of them. In any large-scale SMR protocol, most, if not all players, will not consider their deviations as affecting SMR. Therefore, it is essential that we assume non-adversarial players will seek to maximise tokens in tokenised protocols. As a direct consequence, SMR guarantees can no longer depend on honest-by-default users. We explicitly outline the $\ourBAR$ (Byzantine or Rational) model as an updated player characterisation framework to reflect this weakness in current standards. By moving to the $\ourBAR$ model, which we formally define in Definition \ref{def:PlayerModel}, the caveat of honest player dependencies in current SMR protocols is removed. Furthermore, we demonstrate that it is possible to achieve SMR in the $\ourBAR$ model by providing the $\protocolName$ protocol, an amendment to the Tendermint protocol \cite{Tendermint,LatestGossipTendermint}. 

To progress towards global adoption, a tokenised SMR protocol must first ensure that all players will maximise their tokens by following the protocol. Implementing an SMR protocol that increases a player's tokens for following the protocol is known as incentivisation, and is a fundamental requirement for any SMR protocol. Much of the work on incentivisation in SMR protocols stems from the seminal work on \textit{selfish mining} in Nakamoto-consensus \cite{SelfishMining}. In \cite{SelfishMining}, it is demonstrated that certain players are incentivised to deviate from the prescribed protocol. This eventually leads to a scenario where SMR properties are violated, as discussed in \cite{SelfishMining}. It is only upon the performing of actions as required by the protocol by some majority that it is possible to guarantee the SMR properties of safety and liveness. This has remained the case in the age of tokenisation.

Despite this, there has been no thorough treatment and analysis of tokenised SMR protocols from a game-theoretic standpoint involving rational players, who want to maximise their net tokenised gains (referred to as utility increases in game-theoretic literature), and an adversary, who can corrupt the owners of some amount of the tokenised consensus resource and behave arbitrarily. These corrupted players are known as Byzantine. This characterisation of players as either Byzantine or Rational, which we refer to as the $\ourBAR$ model, was first considered in distributed systems literature in \cite{SelfishMeetsEvil}, but never successfully with respect to SMR protocols, although attempts have been made \cite{Fairledger,RationalsvsByzantinesConsensusBasedBlockchains,BlockchainsCannotRelyonHonesty}. The closest semblance to this model which has seen wide-scale adoption with respect to SMRs is the BAR (Byzantine, Altruistic and Rational) model \cite{BAR-FT}. The BAR model crucially includes some portion of altruistic players who disregard tokenised utility, and always follow the protocol. Examples of authors echoing our desire to move away from altruistic dependencies are numerous, but this from  Fairledger \cite{Fairledger} puts it concisely: ``We have to take into account that every entity may behave rationally, and deviate from the protocol if doing so increases its benefit". Non-adversarial, honest-by-default characters do not exist in competitive games, and cannot be depended on in tokenised SMR protocols due to their gamified nature. Although many other works state the need to move away from altruistic dependencies, none have proven the critical nature of this dependency, or provided protocols which achieve SMR, in the $\ourBAR$ model. In this paper, we fulfil both of these essential tasks.

Without the safety net of altruistic players, any successful instantiation of an SMR protocol in the $\ourBAR$ model must guarantee that rational players will always follow the protocol. To ensure this, rational players must expect to strictly maximise their utility by following the protocol, a property we define as \textit{strong incentive compatible in expectation} ($\SICIE$). 

Moreover, we must also guarantee that within such an incentive compatible protocol, the adversary cannot increase their share of tokens to a point where they control enough tokens to prevent SMR. Despite the existence of strong incentive compatibility in expectation, it may be possible for an adversary to receive more than their share of the tokens that get distributed, increasing their share of control.
Therefore, we must additionally ensure that an adversary cannot increase the share of tokens they control, a property we define as \textit{fairness}.

\subsection{Our Contribution}

We define the $\ourBAR$ player characterisation model, the properties of $\SICIE$ and fairness, and in Definition \ref{def:SMRFails}, the basic requirements a prospective SMR protocol must meet in order to guarantee safety and liveness in the $\ourBAR$ model. If these requirements are met for a protocol in the $\ourBAR$ model, the protocol \textit{achieves $\ourSMR$}. Informally, to achieve $\ourSMR$ we require that players controlling a majority of tokens follow the protocol at all times. 
We then prove that the properties of $\SICIE$ and fairness are necessary and together sufficient to achieve $\ourSMR$ in the main theorem of the paper.

\begingroup
\def\thetheorem{\ref{thm:fair+SICIE=Necessary+Sufficient}}
\begin{theorem}
    For an SMR protocol $\protocol$, $\protocol$ achieves $\ourSMR$ if and only if $\protocol$ is strong incentive compatible in expectation and fair. 
\end{theorem}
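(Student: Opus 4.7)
The plan is to split the biconditional into its two directions and handle them separately. For necessity (that achieving $\ourSMR$ forces both $\SICIE$ and fairness), I will argue by contrapositive in each case. For sufficiency I will use an induction on blockchain height $\blockchainHeight$, showing that if at each height the majority of stake is held by rational players following $\strategyProtocol$, both $\SICIE$ and fairness combine to preserve this invariant into the next height.

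For the necessity direction, first assume $\protocol$ is not $\SICIE$. Then by the definition of $\SICIE$ there exists a rational player $\playeri$ and a deviating strategy $\strategyi \neq \strategyProtocol$ that yields strictly greater expected utility than $\strategyProtocol$. Since $\playeri$ is rational, she must adopt $\strategyi$, so $\playeri$ is not following the protocol. Since $\ourSMR$ (per Definition \ref{def:SMRFails}) demands that players controlling a majority of stake follow the protocol at all times, and since any single rational player with a profitable deviation is enough to witness a class of potentially colluding defectors of non-negligible combined stake, $\protocol$ fails to achieve $\ourSMR$. For the fairness direction, assume $\protocol$ is not fair, so the adversary's expected share of tokens strictly increases over the course of the execution. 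Since the adversary's initial share is strictly below the SMR threshold, an unfair protocol allows this share to drift across the threshold in expectation; once this happens, the players following the protocol no longer form a majority of tokens, so SMR cannot be guaranteed. Thus not-fair also implies not-$\ourSMR$, completing the contrapositive.

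For the sufficiency direction, assume $\protocol$ is both $\SICIE$ and fair. Induct on blockchain height $\blockchainHeight$. At the base case $\blockchainHeight = \initialisationHeight$, the initial token distribution places the adversary strictly below the SMR threshold, so rational players form a majority, and by $\SICIE$ every rational player strictly prefers $\strategyProtocol$ over any deviation, so they follow the protocol at $\blockchainHeight = \initialisationHeight$. For the inductive step, suppose at height $\blockchainHeight$ the rational players (who form a majority of stake) all follow $\strategyProtocol$. By $\SICIE$ applied at height $\blockchainHeight + 1$, every rational player again strictly prefers $\strategyProtocol$ to any deviation, so they continue to follow the protocol. By fairness, the adversary's expected share of stake at height $\blockchainHeight + 1$ does not exceed their share at height $\blockchainHeight$, and in particular remains below the SMR threshold. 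Hence rational players remain a stake-majority at $\blockchainHeight + 1$ and are following the protocol, which is exactly the $\ourSMR$ invariant.

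The main obstacle will be the probabilistic subtlety in the inductive step. Since both $\SICIE$ and fairness are properties in expectation, a naive induction only yields an expectation-level invariant, whereas $\ourSMR$ is required to hold \emph{at all times}. I expect to resolve this by leaning on the fact that $\SICIE$ is a \emph{strict} dominance in expectation (so a rational player chooses $\strategyProtocol$ deterministically, not just in expectation), and by arguing that the adversary's share only changes by amounts tied to per-round rewards, so fairness in expectation plus a concentration/union-bound argument over the polynomially-bounded execution horizon upgrades the expectation-level invariant to one that holds at every round with overwhelming probability, giving the $\approxOne$-style guarantee that is consistent with the paper's security framework.
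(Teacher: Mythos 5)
Your overall architecture (split the biconditional, argue necessity by contrapositive for each property separately, then combine the two properties for sufficiency) is the same as the paper's, which routes necessity through Lemma \ref{lem:SICIENecessary} and Lemma \ref{lem:fairNecessary}. However, there is a genuine logical gap in your necessity argument for $\SICIE$. You claim that if $\protocol$ is not $\SICIE$ then there exists a deviating strategy that yields \emph{strictly greater} expected utility than $\strategyProtocol$, and that the rational player therefore \emph{must} adopt it. That is not the negation of Definition \ref{def:SICIE}: failure of $\SICIE$ only gives some $\strategyi \, \notequiv \, \strategyProtocol$ with $\strategyProtocol \not\strictlyDominates \strategyi$, and since strict dominance in expectation is not a total order, this includes the case where $\strategyi$ merely ties with (or is incomparable to) $\strategyProtocol$. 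The paper's whole point --- and the reason it builds the machinery of the non-dominated set $\strategySetNonDominatedi$ in Lemma \ref{lem:dominantXOR}, Observation \ref{obs:rationalChoices} and Lemma \ref{lem:SICIEiffNSD}, and the reason rational players are defined to randomise uniformly over all utility-maximising strategies --- is precisely that a tie is already fatal: if $\strategySetNonDominatedi \neq \{\strategyProtocol\}$ the player follows $\strategyProtocol$ with probability bounded away from $\approxOneValue$ even though no deviation is strictly better. Your argument as written only covers the strictly-profitable-deviation case and silently assumes the conclusion in the tie case. Relatedly, your "class of potentially colluding defectors of non-negligible combined stake" is not how the threshold is breached; the paper instead invokes the requirement in Definition \ref{def:SMRFails} that the guarantee hold for any $\share_\adversary^\initialisationRound < \adversarialBound$, so one takes the adversary maximal and a single unreliable rational player already pushes the protocol-following share below $1-\adversarialBound$ with non-negligible probability.

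A second, smaller issue is that you treat fairness as an expectation-level property ("the adversary's expected share strictly increases") and then propose a concentration/union-bound repair in the sufficiency direction. Definition \ref{def:fair} is already a per-round tail bound, $P(\share^\round_\adversary\leq \share^\initialisationRound_\adversary) > \approxOneValue$ for every round, so its negation is a non-negligible probability of the share exceeding its initial value at some round --- not a drift in expectation --- and the sufficiency direction needs no concentration argument: the paper simply combines Corollary \ref{cor:stategyNotStrictlyDominant} (all rational players choose $\strategyProtocol$) with the per-round fairness bound to conclude directly that players controlling more than $1-\adversarialBound$ of the stake follow $\strategyProtocol$ with probability greater than $\approxOneValue$. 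Your height-by-height induction is harmless but unnecessary, and the "main obstacle" you anticipate dissolves once fairness is read as stated.
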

\addtocounter{theorem}{-1}
\endgroup

In addition to this new game-theoretical framework, we provide $\protocolName$ as a concrete instantiation of an SMR protocol that provably achieves $\SICIE$ and fairness in the $\ourBAR$ model. Using Theorem \ref{thm:fair+SICIE=Necessary+Sufficient}, we then prove $\protocolName$ achieves SMR in the $\ourBAR$ model.

\subsection{Organisation of the paper}

In Section \ref{sec:relatedWork} we review related work and present an overview of attempts to implement, and works in favour of, the $\ourBAR$ model for SMR protocols. In Section \ref{sec:SystemModel} we provide a background on the SMR and game theory concepts needed to define the $\ourBAR$ model. Section \ref{sec:ourSMRDefinitions} introduces a new game-theoretic framework for analysing SMR protocols. This new framework defines the $\ourBAR$ model, and outlines what we require from SMR protocols in the $\ourBAR$ model, introducing the properties of $\SICIE$ and fairness.
In Section \ref{sec:Properties} we prove that $\SICIE$ and fairness are necessary for a protocol to achieve $\ourSMR$. We then prove that together, $\SICIE$ and fairness are sufficient properties for a protocol to achieve $\ourSMR$.
In Section \ref{sec:Protocol} we outline the $\protocolName$ protocol as an example, for the first time in literature, of a $\SICIE$ and fair $\ourSMR$ protocol. 
In Section \ref{sec:ProtocolProofs} we reason that $\protocolName$ satisfies the necessary and sufficient properties of safety and liveness for SMR when players controlling a majority of the consensus votes follow the protocol in every round. 
We then prove that the $\protocolName$ protocol is $\SICIE$ and fair, which using Theorem \ref{thm:fair+SICIE=Necessary+Sufficient}, implies $\protocolName$ achieves $\ourSMR$. We conclude in Section \ref{sec:conclusion}.
\section{Related Work}\label{sec:relatedWork}

There is a growing appreciation  that incentivisation is not only important, but necessary, to ensure the successful instantiation of an SMR protocol. Many works have argued for the incentivisation of players in SMR protocols  \cite{Ouroboros, BlockchainWithoutWastePoS,EvolutionOfSharesPoS,RationalBehaviorCommitteeBasedBlockchains, RationalsvsByzantinesConsensusBasedBlockchains, BlockchainsCannotRelyonHonesty, CasperIncentives, FruitChains,SnowWhite,SoKConsensus,SOKGameTheoryBlockchain,SmartCast, SelfishMeetsEvil, SoKToolsGameTheoryCryptocurrencies, Solida} while many others demonstrate the critical need for incentive compatibility in tokenised SMR protocols \cite{FlashBoys2.0,FairnessTendermint2019,BitcoinANaturalOligopoly, BitcoinRnDArmsRace, BlockchainFolkTheorem, SelfishMining, PoSCompounding, Bitcoin, IntermittentSelfishMinig, FairnessDAGs, EconomicLimitsBitcoinBlockchain, EIP1559roughgarden}. 

%Include if not happy with table footnotes
%\footnotetext[1]{An idealised model where every message, including adversarial messages, are known to be instantly delivered to all players.}\footnotetext[2]{No explicit reward mechanism provided, non-trivial for BFT protocols.}\footnotetext[3]{Enforced by the idealised network model/ unspecified reward mechanism.}\footnotetext[4]{No adversary in their model.}\footnotetext[5]{Author creates a dominating cost unrelated to quantity of stake for deviation.}

\begin{table*}
    
  \begin{tabular}{lccccc}
    \toprule
    Paper & Network Model & \makecell{Player Model \\ w/o Honest Players} & Evolving-Stake Adversary & SINCE & Fair \\
    \midrule
    Rationals vs. Byzantines \cite{RationalsvsByzantinesConsensusBasedBlockchains} & Broadcast Synchrony\footnotemark & \textcolor{green}{\yes} & \textcolor{red}{\no} & \textcolor{green}{\yes} \footnotemark & \textcolor{green}{\yes} \footnotemark   \\  
    Blockchain Without Waste \cite{BlockchainWithoutWastePoS} & Synchrony & \textcolor{green}{\yes} & \textcolor{red}{\no}\footnotemark & \textcolor{green}{\yes} \footnotemark  & \textcolor{red}{\no}  \\  
    Blockchains Cannot Rely on Honesty \cite{BlockchainsCannotRelyonHonesty} & Synchrony & \textcolor{green}{\yes}  & \textcolor{red}{\no} & \textcolor{red}{\no} & \textcolor{red}{\no}  \\  
    Fruitchains, Snow White \cite{FruitChains,SnowWhite} & Partial Synchrony & \textcolor{red}{\no} & \textcolor{red}{\no} & \textcolor{red}{\no} & \textcolor{red}{\no}  \\  
    Casper Incentives \cite{CasperIncentives} & Partial Synchrony & \textcolor{red}{\no}  & \textcolor{red}{\no} & \textcolor{red}{\no} & \textcolor{red}{\no}  \\ 
    FairLedger \cite{Fairledger} & Synchrony & \textcolor{red}{\no}  & \textcolor{red}{\no} & \textcolor{red}{\no} & \textcolor{red}{\no}  \\  
    \textbf{$\protocolName$} (Algorithm \ref{alg:tenderstake}) & Partial Synchrony & \textcolor{green}{\yes} & \textcolor{green}{\yes} & \textcolor{green}{\yes} & \textcolor{green}{\yes}  \\
    \bottomrule
  \end{tabular}
  \caption{Comparison of main works claiming incentive compatibility\label{table:RelatedWorkComparison}.
  \footnotesize{\textsuperscript{2}An idealised model where every message, including adversarial messages, are known to be instantly delivered to all players.
  \textsuperscript{3}No explicit reward mechanism provided, non-trivial for BFT protocols.
  \textsuperscript{4}Enforced by the idealised network model/ unspecified reward mechanism.
  \textsuperscript{5}No adversary in player model.
  \textsuperscript{6}Author creates a dominating cost unrelated to quantity of stake for deviation.} }
  
\end{table*}

The characterisations of Byzantine and rational, coupled with that of altruistic players who always follow the protocol, segues into the BAR player characterisation model as introduced in \cite{BAR-FT}. However, as discussed in Section \ref{sec:intro}, tokenised SMR protocols cannot depend on altruistic players to ensure the critical properties of safety and liveness. 
We amend the player characterisations to only include those of Byzantine and rational players in what we call the $\ourBAR$ model. 

A very similar player model is discussed in \cite{SelfishMeetsEvil}, but with respect to a single binary action multiparty computation. We extend this basic binary action space for players to allow for indefinite sequentialised non-binary action profiles in line with those of SMR protocols. We introduce the necessity for strict maximisation of expected utility to ensure rational players always follow a protocol. This is opposed to \cite{SelfishMeetsEvil}, where it is claimed that equality of utility will suffice to ensure a rational player will choose one strategy over another. This is logically insufficient. Related to this concept of insufficient proof mechanisms, a common pitfall of legacy incentive compatible proofs is to prove that following a protocol is a Nash Equilibrium in the presence of honest players \cite{FruitChains,SnowWhite,Ouroboros,PoSCompounding}. In the $\ourBAR$ model this assumption is not possible, and therefore those proofs are not sound.  We also allow the adversary to behave arbitrarily, as opposed to \cite{SelfishMeetsEvil} where the adversary only tries to minimise the utility of rational players. Although there are buzzwords associated with this paper such as \textit{Price of Malice} and \textit{Price of Anarchy}, no name is attributed to the player model. We refer to our version of this player model as the $\ourBAR$ model. The only examples of this player model in SMR literature making meaningful attempts to remove altruistic entities are in \cite{RationalsvsByzantinesConsensusBasedBlockchains,BlockchainsCannotRelyonHonesty}.

Table \ref{table:RelatedWorkComparison} exhibits the shortcomings of related work in providing protocols that guarantee rational players always follow the protocol (SINCE), and that prevent an adversary from increasing their share of stake to destroy the system (Fair). Table \ref{table:RelatedWorkComparison} also includes our proposal, $\protocolName$ as a standard against which to compare these works.

In \cite{RationalsvsByzantinesConsensusBasedBlockchains}, it is implicitly assumed rewards are paid to all players who contribute to consensus on a block. This is non-trivial in the $\ourBAR$ model, as rewards in their system depend on message delivery. From a protocol's perspective, these messages need to recorded by a proposer at some point in the protocol, and rational proposers may be incentivised to omit players, as is the case in previous works from subsets of the same authors \cite{FairnessTendermint2018,FairnessTendermint2019}. We address this omission in the $\protocolName$ protocol, providing an explicit solution in the $\ourBAR$ model. 

Although \cite{BlockchainsCannotRelyonHonesty} provides an SMR protocol which approaches SINCE, they do not provide a rigorous player model excluding altruistic players, and in the presence of a deviating adversary, there are strategies which strictly outperform the recommended protocol strategy for rational players, preventing both strong incentive compatibility and fairness. 

A purely economic approach to SMR protocols is taken in \cite{BlockchainWithoutWastePoS}, which focuses on Proof-of-Stake protocols. Their player model only considers rational players, and depends on a dominating cost for certain deviations that is not quantifiable within the protocol game of maximising stake. Namely, the author assumes rational players in a longest chain rule Proof-of-Stake system will never try to fork the blockchain, as doing so devalues stake in terms of some external fiat currency more than any possible reward. We believe this does not necessarily affect the decisions of all rational players, which is also acknowledged in \cite{BlockchainWithoutWastePoS} where participation in the protocol is restricted to players with a ``sufficient coin holding''. Another concern about such an arbitrary external cost arises when we consider settings where the stake/ cryptocurrency in question becomes a dominant fiat currency, and the majority of participants only consider utility as measured in said stake. In this paper, we demonstrate that it is possible to construct a protocol, $\protocolName$, that strictly maximises stake by following the protocol. As following the protocol maximises the value of stake in \cite{BlockchainWithoutWastePoS}, $\protocolName$ captures the same maximisation of value without the potentially problematic dependency on unquantifiable external costs unrelated to quantity of stake.

One of the legacy works in relation to fairness and incentive compatibility of SMR protocols is Fruitchains \cite{FruitChains}. The Fruitchains player model consists of an altruistic majority of players and a cooperative rational minority. Fruitchains crucially relies on an underlying blockchain satisfying an SMR protocol in order to guarantee fairness of rewards. They fail to consider the incentives of all parts of the system, relying on an altruistic majority in order to guarantee the underlying blockchain satisfies the required SMR properties. They then add a small section where claims of incentive compatibility for non-cooperative rational players are made. The authors claim a protocol is incentive compatible if fairness of rewards has already been guaranteed. As fairness in their system is only guaranteed if a majority of players follow the protocol, there is no logical result which proves that rational players will always follow the protocol, required for incentive compatibility. This is insufficient to guarantee SMR in the $\ourBAR$ model. This fatal dependence on an underlying correct-by-default SMR protocol/ trusted third-party is also demonstrated in \cite{CasperIncentives,Fairledger}, where claims of incentive compatibility and fairness do not hold in the $\ourBAR$ model.

\section{Preliminaries}\label{sec:SystemModel}

This section covers the concepts and definitions required to reason about SMR protocols from a game-theoretic perspective. First we define SMR and  a general notion of a blockchain which provides some intuition for our SMR definitions, and primes the reader for our description of the $\protocolName$ protocol in Section \ref{sec:Protocol}. We then provide the game theory framework necessary to formally reason about SMR protocols involving rational and adversarial players, and how SMR can be achieved in the presence of these types of players. In the following we let $\textit{negl}()$ be a function which for any polynomial $\textit{p}()$ there exists a constant $\kappa_0 \in \mathbb{N}$ such that $\textit{negl}(\kappa) < \frac{1}{\textit{p}(\kappa_0)}$ for all $\kappa \geq \kappa_0$. This $\textit{negl}()$ is known in literature as a \textit{negligible} function. 

In this paper, we are interested in a distributed set of $\numPlayers$ players $\playerSetDescription$ interacting with one and other inside a protocol which will produce some output that all players correctly participating in the protocol can agree on. This output will be a \textit{replicated state machine}. First, we define a state machine.

\begin{definition}
    A \textit{state machine} consists of set of variables, and sequence of commands/ updates on those variables, producing some output. 
\end{definition}

The concept of a state machine alone does not capture the notion that potentially many players can reconstruct a common view of the same state of a machine, and requires extension.

\begin{definition}
    For a set of players $\playerSetDescription$ and a state machine, \textit{state machine replication} (SMR) is a process that allows each player to execute a common sequence of commands acting on the machine's state in the same order, thus maintaining a common view of the machine's state.
\end{definition}

Progressing towards our goal of analysing SMR protocols, we must first define what we require from an SMR protocol. We take inspiration for our definition from \cite{SyncByzCon}, where their system model is clearly and concisely explained, and is very similar to ours. 

\begin{notation}
    With respect to protocols and recommended protocol actions, a \textit{correct} player is a player who always follows the recommended protocol actions.
\end{notation}

\begin{definition}\label{def:SMR}
An \textit{SMR protocol} $\protocol$ deciding on a potentially infinite sequence of state machine updates satisfies the following properties:
\begin{itemize}
    \item \textit{Safety}: For any two correct players $\playeri$, $\player_j$ in $\protocol$, $i\neq j$, if $\playeri$ decides on an SMR update $\SMRUpdate_i$ at position $k$ in the sequence, and $\player_j$ decides on an SMR update $\SMRUpdate_j$ at position $k$ in the sequence, then $\SMRUpdate_i = \SMRUpdate_j$.
    \item \textit{Liveness}: For any position $k$ in the sequence, every correct player eventually decides on an SMR update for position $k$.
\end{itemize}
\end{definition}

To achieve SMR, we utilise the concept of a blockchain. This is done in a generic manner so as to allow for direct comparison with most blockchain instantiations.

\begin{definition}
    A \textit{block} $\block$ is a data structure used to communicate changes to the state machine view of each player. Blocks consist of a pointer(s) to previous block(s), and a set of instructions with which to update the state. State machine updates in a block are applied to the state described by the block(s) to which they point. The \textit{genesis block} $\blockGenesis$ describes the starting state of the system and is a priori agreed upon by all players. The global state at any point in the system is then described by applying the state machine updates according to some ordering rule starting from the genesis block. A \textit{blockchain} $\blockchain= [ \blockGenesis,...,\blockH ]$ is the ordered data structure created by traversing the block pointers from the genesis block to all blocks to be applied to the global state according to the ordering rule. $\blockchainHeight$ denotes the \textit{height} of the blockchain.
\end{definition}

In our system, an SMR protocol $\protocol$ consists of $\numPlayers$ players owning shares of a finite resource, which we will refer to as \textit{stake}, and denoted $\stake^\initialisationRound$ at initialisation. $\protocol$  proceeds in fixed-time periods, which we refer to as \textit{rounds}, beginning in round $\initialisationRound$.
For any height $\blockchainHeight > \initialisationHeight$ of the blockchain, players participate in $\protocol$ to decide on a block for that height. Reaching consensus on a block will involve one or more successful protocol steps. After a block has been decided for height $\blockchainHeight\geq \initialisationHeight$, the total stake in the system is denoted $\stakeH$ with player shares of $\stakeH$ denoted $\shareH_1,....,\shareH_\numPlayers$. Without loss of generality, we assume $\sum^{\numPlayers}_{i=1}\shareH_i=1$, and for all $  i \in \playerIndices, \, \blockchainHeight \geq1$, $\shareH_i<\frac{1}{2}$.

Now we introduce some basic game theory to allow us to properly reason about SMR protocols in our system as games, taking inspiration for our definitions from \cite{AlgoGameTheory}. The games we are concerned with, SMR protocols, are played by players with strict incomplete information, meaning some subset of players will not know the action choices of other players for certain rounds when they are required to choose their own actions. As such, we need to be able to describe what a player knows (and implicitly what they do not), which we call their private information. Furthermore, we must be able to describe what motivates players in games. This motivation is provided by a utility function, which attributes a numerical score to each action a player can take. In games, players choose the action which maximises their utility function.

\begin{definition}\label{def:game}
A \textit{game}, denoted $\mathbb{G}$, progressing in rounds with strict incomplete information for a set of $\numPlayers$ players $\playerSetDescription$ can be described by the following:
 \begin{itemize}
    \item For every $\playeri$, a set of \textit{actions} $\actionSeti$. We denote by $\actionSet_{-i}$ the set of actions that each player excluding $\playeri$ can take. 
    For $\action_{-i} \in \actionSet_{-i}$, $\action_{-i}$ is described by a vector of actions of length $\numPlayers-1$, with each vector position mapping to a unique player.
    \item For every player $\playeri$ and round $\roundCurrent$, a set of \textit{private informations} $\typeSetir$. 
    A value $\typeir \in \typeSetir$ is a private information value that $\playeri$ can have at round $\roundCurrent$. We denote by $\type_{-i}^\roundCurrent$ the private informations held by all players excluding $\playeri$ at round $\roundCurrent$.
    
    \item For every player $\playeri$, current round $\roundCurrent \geq \initialisationRound$, and some round $\roundprime \geq \roundCurrent$, the \textit{utility function } for $\playeri$ with respect to round $\roundprime$ is defined as : 
    \begin{equation}
        \utilityir: \typeSeti^\roundCurrent \times \underbrace{\actionSeti \times ... \times \actionSeti}_{\roundprime+1 - \roundCurrent} \times \underbrace{\actionSet_{-i} \times ... \times \actionSet_{-i}}_{\roundprime+1 - \roundCurrent} \to \mathbb{R}
    \end{equation}
    where $\utilityir(\typeir, \actioni^\roundCurrent,..., \actioni^\roundprime,\action_{-i}^\roundCurrent,..., \action_{-i}^\roundprime )$ is the utility achieved by $\playeri$ in round $\roundprime$ with private information $\typeir$, if player $\playeri$ takes the actions $\actioni^\roundCurrent,..., \actioni^\roundprime$ in rounds $\roundCurrent,...,\roundprime$ respectively,  and the actions of all other players are described by $\action_{-i}^\roundCurrent,..., \action_{-i}^\roundprime$ in rounds $\roundCurrent,..., \roundprime$ respectively.
 \end{itemize}  
\end{definition}

Although utility functions evaluate actions given the actions of all other players, the actions of the other players may not be known in advance. Therefore, players will need to be able to choose their actions solely based on their private informations. The actions a player takes given some private information are computed through a strategy, which is defined in Definition \ref{def:strategy}.

\begin{definition}\label{def:strategy}
   A \textit{strategy} of a player $\playeri$ is a function $\strategyi:  \typeSetir  \to \actionSeti$, $\roundCurrent \geq \initialisationRound$, which defines the action to be taken by $\playeri$ given some private information value. A strategy $\strategyi$ is \textit{mixed} if for a player $\playeri$ with $m_i$ possible strategies $\strategySeti=\{\strategyi^1,\, ..., \strategyi^{m_i}\}$, they select a strategy to follow from $\strategySeti$ according to some probability distribution. For every player $\playeri$, $\strategy_{-i}$ describes the mixed strategies taken by all players excluding $\playeri$.
\end{definition}

\begin{definition}
   For an SMR protocol $\protocol$, the \textit{recommended strategy}, denoted $\strategyProtocol$, is the strategy that $\protocol$ requires players to follow in order to successfully achieve SMR.
\end{definition}

\section{A Game-Theoretic Framework for SMR}\label{sec:ourSMRDefinitions}

In this section we formalise the $\ourBAR$ framework for SMR protocols, where participants are either adversarially or rationally motivated. This is in response to the existential threat posed by the growing trend of players managing SMR protocols acting in a profit-maximising manner \cite{FlashBoys2.0} in protocols where security guarantees depend on honest players. Furthermore, this framing is made quite naturally, given SMR protocols are accurately modelled as games with strict incomplete information as defined in Definition \ref{def:game}. 

This is a crucial progression from existing standards in distributed systems literature where some number of non-adversarial players are honest-by-default. Due to the distributed nature of SMR protocols, as a baseline we must account for some portion of adversarial players who can behave arbitrarily with unknown utility functions. With SMR protocols considered as games, the remaining non-adversarial players must follow some known utility function, and attempt to choose the actions which maximise it. To ensure the honest behaviour of rational players in this setting, following the protocol strategy must maximise the utility of rational players. We define these player characterisations here formally as the $\ourBAR$ model.

\begin{definition}\label{def:PlayerModel}
    The \textit{$\ourBAR$ model} consists of \textit{Byzantine} and \textit{Rational} players. A player is:
    \begin{itemize}
        \item \textit{Byzantine} if they deviate arbitrarily from the recommended strategy within a game with unknown utility function. Byzantine players are chosen and controlled by an adversary $\adversary$.
        \item \textit{Rational} if they choose uniformly at random from all mixed strategies which maximise their known utility function assuming all other players are rational.
    \end{itemize}
        
\end{definition}

\begin{remark}
    Our definition of rational players omits tie-breaking assumptions that bias a rational player to certain strategies over others with equal utility. For example, if we have a fair coin tossing game that costs 1 token to play and correct guesses gain 3 tokens, a rational player in our system will choose heads with probability 0.5. If we have a protocol that requires rational players to always choose heads, it is necessary to make the payoff for heads strictly greater than that of tails. 
\end{remark}

A rational player who assumes all other players are rational is known as an \textit{oblivious} rational player \cite{SelfishMeetsEvil,PriceOfMalice}. A rational player who is not oblivious knows there are players in the system controlling a non-negligible share of stake, controlled by an adversary, who may try to break safety and liveness. Adding this to the private information of a rational player adds a probability of safety and liveness failing if protocol actions are not followed by the remaining players, which becomes 1 in the presence of a maximal adversary. This outcome has a critical cost for rational players (as used in \cite{RationalsvsByzantinesConsensusBasedBlockchains,BlockchainWithoutWastePoS, SelfishMeetsEvil,PriceOfMalice}), which can be made arbitrarily high to prevent rational players from deviating from protocol actions. Under the non-oblivious assumption, all rational players will follow the protocol, and proofs of following protocol actions become trivial. 

We believe this is highly unrepresentative of rational players in SMR protocols today, particularly in light of the clear recent evidence that miners can and are deviating from protocol actions to increase their on-chain rewards \cite{FlashBoys2.0}. As such, in the rest of this paper, we assume all rational players are oblivious, and prove our main lemmas and theorems given this weakest possible assumption about adversarial share distributions.

To consider rational players in any game, it is necessary to explicitly define what their utility functions are. Inkeeping with the tokenised assumptions of our model, we let rational player utility be measured in stake as described by the blockchain. By their nature, tokenised SMR protocols require it to be expensive to deviate from the protocol actions, encouraging honest behaviour through stake rewards, and/or stake punishments for dishonest behaviour. Given the unprecedented levels of SMR protocol usage as a result of tokenisation, we see stake as the driving utility measure for the players who participate in these protocols.

As total stake is only meaningful with respect to a particular time-point, and SMR protocols are played indefinitely, rational players will seek to maximise their total stake at all possible rounds sufficiently far into the future. 
Therefore, when discussing incentivisation and player utility, it is necessary to refer to stake/share/total stake with respect to rounds. As we are using the round variable as a counter, and some rounds may be unsuccessful, it cannot be independently used to determine the height, and vice versa. Rather than add notation to relate the two, we treat them separately, and make it clear from context which is being used. When referring to stake/share/total stake with respect to particular rounds, we use superscripts involving $\round$, whereas when discussing these variables with respect to the height of the blockchain, we use superscripts involving $\blockchainHeight$.

In the $\ourBAR$ model, and SMR protocols in general, it is necessary to specify an upperbound on adversarial share of stake, below which SMR can be achieved if all non-adversarial players follow the protocol, and above which SMR cannot be guaranteed.

\begin{notation}
   For an SMR protocol $\protocol$, we denote by $\adversarialBound$ the maximal share of stake such that for players controlling greater than $1-\adversarialBound$ of the stake following the SMR protocol, safety and liveness are achieved. The exact value of $\adversarialBound$ will depend on the network distribution assumptions, in line with the results of \cite{PartialSynchronyDwork}, which must be contained in the threat model.
\end{notation}

For some security parameter $\secParam \in \mathbb{N}$, our goal is to guarantee that SMR can be achieved (that is, both  safety and liveness are satisfied) in the $\ourBAR$ model with probability greater than $\approxOne$ over any $\poly$ rounds.

We first need to introduce an equivalence relation for mixed strategies over finite rounds. When we state the protocol strategy which needs to be followed to achieve SMR, although there is an infinite number of strategy encodings, we only require players to follow strategies which result in actions as outlined by the protocol. We are indifferent to how this is achieved. If a strategy is encoded differently to the recommended protocol strategy, but  results in actions as prescribed by the protocol with probability greater than $\approxOne$ over any $\poly$ rounds, we see this as equivalent to the recommended protocol strategy.

\begin{definition}
    For a player $\playeri$ at initialisation, and round $\roundprime \geq \initialisationRound$, two mixed strategies $\strategyi^a$ and  $\strategyi^b$ are \textit{equivalent with respect to round $\roundprime$} if for all rounds $\round$, $\initialisationRound \leq \round \leq \roundprime $, and private informations $\typei^{\round} \in \typeSeti^{\round}$, it is the case that $\strategyi^a(\typei^{\round})=\strategyi^b(\typei^{\round})$. 
    We use $\strategyi^a \equiv^{\roundprime} \strategyi^b$ to denote this equivalence relation. If $\strategyi^a \equiv^{\roundprime} \strategyi^b$ for all rounds $  \roundprime \, \poly$,  $\strategyi^a $ and $\strategyi^b$ are \textit{equivalent}, denoted  by $\strategyi^a \equiv \strategyi^b$.
\end{definition}

With this equivalence relation, we can now define what it means for a protocol to achieve SMR in the $\ourBAR$ model. In this paper, after deciding on a block at height $\blockchainHeight \geq \initialisationHeight$, we denote the adversarial share of stake by $\shareH_\adversary$.

\begin{definition}\label{def:SMRFails}
   For an SMR protocol $\protocol$ and round $ \round$, let $p^\round_{\protocol}$ be the probability that players controlling more than $1-\adversarialBound$ of the total stake follow a mixed strategy $\strategy \equivR \strategyProtocol$ up to and including round $ \round$ for any $\share_\adversary^\initialisationRound<\adversarialBound$.  $\protocol$ \textit{achieves $\ourSMR$} if for all rounds $\roundprime \, \poly$ it holds that $p^\roundprime_{\protocol} $ is greater than $ \approxOne$. Otherwise, $\protocol$ \textit{fails in the $\ourBAR$ model}. 
\end{definition}

Towards the goal of achieving $\ourSMR$, we need to formally define rational utility as measured in stake. For a rational player $\playeri$ with private information $\typeir$ and round $\roundprime \geq \round$, we have:
\begin{equation}
    \utilityi^{\roundprime}(\typeir, \actioni^\round,..., \actioni^\roundprime,\action_{-i}^\round,..., \action_{-i}^\roundprime )= \sharei^\roundprime \cdot \stake^\roundprime.
\end{equation}
However, in a game with strict incomplete information as is the case in an SMR protocol, a rational player $\playeri$ with private information $\typeir $ will not know their own future private information values (required to choose their actions), the private informations of the other players, or $ \strategy_{-i}$, before choosing $\strategyi$. 
Therefore, $\playeri$ must choose the mixed strategy which maximises $\playeri$'s expected stake at round $\roundprime$, denoted $E(\sharei^\roundprime \cdot \stake^\roundprime)$, according to the probability distribution that $\playeri$ attributes to possible values for these unknowns. This distribution will be contained in $\typeir$.

Thus, knowing $\typeir$ is sufficient to calculate $\playeri$'s expected utility of a particular strategy at round $\roundprime$, which we express mathematically by $E(\sharei^{\roundprime} \cdot  \stake^{\roundprime} | \typeir, \strategyi)$.  We state this formally in Definition \ref{def:rationalUtility}.

\begin{definition}\label{def:rationalUtility}
   For an SMR protocol $\protocol$ and rational player $\playeri$ with private information $\typeir$, mixed strategy $\strategyi$, and a particular round $\roundprime \geq \round$, the \textit{expected utility of $\strategyi$ for $\playeri$ at round $\roundprime$} is denoted $\expectedUtilityiRound(\typeir, \strategyi)$ and is described by $\expectedUtilityiRound(\typeir, \strategyi)=E(\sharei^{\roundprime} \cdot \stake^{\roundprime} | \typeir, \strategyi)$. 
\end{definition}

As such, for a rational $\playeri$ in an SMR protocol $\protocol$ with private information $\typeir$, $\playeri$  will choose the mixed strategy $\strategyi$ which maximises $\expectedUtilityiRound(\typeir, \strategyi)$. To establish the existence, or not, of such a mixed strategy, we introduce an inequality in Definition \ref{def:strictDominate} which allows us to pairwise rank mixed strategies by expected utility.

\begin{definition}\label{def:strictDominate}
   For an SMR protocol $\protocol$, rational player $\playeri$ and two mixed strategies $\strategyi^a, \ \strategyi^b$,  $\strategyi^a$ \textit{strictly dominates} $\strategyi^b$ \textit{in expectation} if there exists $ \,  \roundprimeprime \geq \round $, $ \,  \roundprimeprime \, \poly$ , such that for all $ \, \roundprime > \roundprimeprime$, $\expectedUtilityiRound(\typeir, \strategyi^a)> \expectedUtilityiRound(\typeir, \strategyi^b)$. If $\strategyi^a$ strictly dominates $\strategyi^b$ in expectation, we denote this relationship by $\strategyi^a \strictlyDominates \strategyi^b$.
\end{definition}

Using the strict dominance in expectancy relationship, we can formally define what we require from an SMR protocol in order for rational players to follow the recommended protocol strategy. This requirement is strong incentive compatibility in expectation, and is defined in Definition \ref{def:SICIE}.

\begin{definition}\label{def:SICIE}
   An SMR protocol $\protocol$ is \textit{Strong INcentive Compatible in Expectation} ($\SICIE$) if for any rational player $\playeri$, $\strategyProtocol \strictlyDominates \strategyi$  for all mixed strategies $ \,  \strategyi \in \strategySeti$, with $\strategySeti$ the set of mixed strategies available to $\playeri$, such that  $ \strategyi \, \notequiv \, \strategyProtocol $.
\end{definition}

For a protocol to be $\SICIE$ in the $\ourBAR$ model ensures that all rational players will follow the recommended protocol strategy. However, $\SICIE$ is not on its own sufficient to ensure the safety and liveness of an SMR protocol in $\ourBAR$ model. It is still possible for an adversary to gain more than their fair share of rewards, and as such, increase their total share above the critical threshold of $\adversarialBound$. Towards achieving SMR in the $\ourBAR$ model, it must be ensured that the adversarial share remains strictly bounded by the threshold $\adversarialBound$ required to achieve SMR if all non-adversarial players follow the protocol. We explicitly define what we mean by fairness in the $\ourBAR$ model in Definition \ref{def:fair}.

\begin{definition}\label{def:fair}
   An SMR protocol $\protocol$ with adversary $\adversary$  is \textit{fair} in the $\ourBAR$ model if $P(\share^\round_\adversary\leq \share^\initialisationRound_\adversary) > \approxOne$ for any round $\round \geq\initialisationRound$ .
\end{definition}

With $\SICIE$ and fairness, we have two intuitive properties which turn out to be crucial in achieving $\ourSMR$. In Section \ref{sec:Properties}, we show that it is impossible to guarantee the actions of players controlling more than $1-\adversarialBound$ of the stake if these properties do not hold. Explicitly, we prove that the properties of $\SICIE$ and fairness are necessary, and together sufficient, to achieve $\ourSMR$. 
\section{Achieving SMR in the \texorpdfstring{$\ourBAR$}{TEXT} Model }\label{sec:Properties}

Towards our final goal of proving that the properties of $\SICIE$ and fairness are necessary, and together sufficient, to achieve $\ourSMR$, the first step is to prove in Lemma \ref{lem:SICIENecessary} that $\SICIE$ is necessary. To allow us to prove this result, we introduce notation which allows us to consider, for a potential SMR protocol, the strategies from which rational players choose. 

\begin{definition}\label{def:strictDominantSet}
    For a rational player $\playeri$ with a set of mixed strategies $\strategySeti$, let $\strategySetNonDominatedi \subseteq \strategySeti$ be such that for all$ \,  \strategyi \in \strategySetNonDominatedi$, there does not exist a $ \strategyi^\utility \in \strategySeti$,  such that $\strategyi^\utility \strictlyDominates \strategyi$. 
\end{definition}

That is, if a mixed strategy $\strategy \in \strategySeti$ is in the set $\strategySetNonDominatedi$, there is no strategy for $\playeri$ which strictly dominates $\strategy$ in expectancy. We provide the following Lemmas towards establishing that rational players will choose strategies exclusively from $\strategySetNonDominatedi$. 

\begin{lemma}\label{lem:dominantXOR}
    For an SMR protocol $\protocol$, a rational player $\playeri$, any strategy $\strategyi^a \in \strategySeti$ , and $|\strategySeti|\geq 2$, either $\strategyi^a \in \strategySetNonDominatedi $ or there is some $ \strategyi^b \in \strategySetNonDominatedi$ such that $\strategyi^b \strictlyDominates \strategyi^a$.
\end{lemma}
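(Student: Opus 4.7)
The plan is to prove Lemma~\ref{lem:dominantXOR} by a simple case analysis on whether $\strategyi^a$ itself already lies in $\strategySetNonDominatedi$. If it does, the first disjunct of the conclusion holds and there is nothing more to show. The interesting case is $\strategyi^a \notin \strategySetNonDominatedi$, where the task is to exhibit some $\strategyi^b \in \strategySetNonDominatedi$ with $\strategyi^b \strictlyDominates \strategyi^a$.

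First I would establish a transitivity lemma for $\strictlyDominates$. Given $\strategyi^p \strictlyDominates \strategyi^q$ and $\strategyi^q \strictlyDominates \strategyi^r$, Definition~\ref{def:strictDominate} supplies rounds $\roundprimeprime_{pq}$ and $\roundprimeprime_{qr}$, each polynomial in $\secParam$, beyond which the corresponding expected utilities are strictly ordered. Taking $\roundprimeprime = \max(\roundprimeprime_{pq}, \roundprimeprime_{qr})$, which remains polynomial in $\secParam$, chains the two strict inequalities and yields $\strategyi^p \strictlyDominates \strategyi^r$.

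Using transitivity, I would set up an iterative construction. Because $\strategyi^a \notin \strategySetNonDominatedi$, Definition~\ref{def:strictDominantSet} guarantees some $\strategyi^{c_1} \in \strategySeti$ with $\strategyi^{c_1} \strictlyDominates \strategyi^a$. If $\strategyi^{c_1} \in \strategySetNonDominatedi$ we are done by setting $\strategyi^b := \strategyi^{c_1}$. Otherwise there exists $\strategyi^{c_2} \strictlyDominates \strategyi^{c_1}$, which by transitivity also dominates $\strategyi^a$, and the process repeats. To extract the required non-dominated dominator I would invoke Zorn's lemma on the poset $(\{\strategyi^a\} \cup \{\strategyi \in \strategySeti : \strategyi \strictlyDominates \strategyi^a\}, \strictlyDominates)$; a maximal element $\strategyi^b$ of this poset satisfies $\strategyi^b \strictlyDominates \strategyi^a$ by construction and, by maximality, also lies in $\strategySetNonDominatedi$.

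The main obstacle I expect is verifying the hypothesis of Zorn's lemma, namely that every $\strictlyDominates$-chain has an upper bound inside the poset. A chain induces a strictly increasing sequence of expected utilities, each bounded above by the total stake, so the utility values converge; however, the existence of a strategy realising the supremum is not automatic in an infinite mixed-strategy space. In the SMR setting, where the action set per round is finite, this subtlety can be handled either by restricting attention to pure strategies and exploiting the resulting finite branching, or by appealing to compactness of the product of simplices describing distributions over finitely many actions; in either resolution every chain admits an upper bound, and the Zorn conclusion, and thus the lemma, goes through.
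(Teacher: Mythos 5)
Your route is genuinely different from the paper's. The paper proves this by induction on $|\strategySeti|$ (base case $|\strategySeti|=2$, then removing the candidate strategy and invoking the inductive hypothesis on the smaller set), which only makes sense for finite strategy sets and silently relies on transitivity of $\strictlyDominates$ in the inductive step. You instead make transitivity explicit --- your $\max(\roundprimeprime_{pq},\roundprimeprime_{qr})$ argument is correct and is a genuine improvement, since the paper asserts the chaining step without justification --- and then try to extract a maximal dominator via Zorn's lemma so as to cover infinite $\strategySeti$.

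The gap is exactly where you suspect it is, and your proposed repairs do not close it. For Zorn you need every $\strictlyDominates$-chain in your poset to have an upper bound \emph{in the poset}; an infinite strictly ascending chain has no upper bound among its own elements, so you must produce a strategy dominating the whole chain. Compactness of the mixed-strategy space at best gives you a limit strategy whose expected utility is the supremum of the chain's utilities, but (i) a limit of strict inequalities is only a weak inequality, so the limit strategy need not \emph{strictly} dominate the chain elements it does not already exceed, and (ii) Definition~\ref{def:strictDominate} requires the threshold round $\roundprimeprime$ to be polynomial in $\secParam$, and these thresholds can grow without a polynomial bound along an infinite chain, so even pointwise-larger utilities need not certify $\strictlyDominates$. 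The "finite branching of pure strategies" remark does not help either, since the lemma is about mixed strategies, which form a continuum. The clean fix is to note that the paper's own induction already presupposes $\strategySeti$ finite; under that hypothesis your iterative construction $\strategyi^{c_1}, \strategyi^{c_2}, \dots$ terminates on its own, because irreflexivity plus your transitivity lemma forbid revisiting a strategy, and a finite set admits no infinite sequence of distinct elements. That yields a complete proof --- arguably cleaner than the paper's cardinality induction --- with no appeal to Zorn at all. If you want the lemma for infinite $\strategySeti$, you would need to add a substantive hypothesis (e.g.\ that expected utilities at each round take finitely many values, or some well-foundedness of $\strictlyDominates$) that neither you nor the paper currently supplies.
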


\begin{proof}
    We will do this by induction over the cardinalities of $\strategySeti$.
    First we check $| \strategySeti|=2$. If $\strategyi^a$ is in $\strategySetNonDominatedi $, we are finished. Assume otherwise. That is,  $\strategyi^b \strictlyDominates \strategyi^a$, which implies $\strategyi^a \not\strictlyDominates\strategyi^b$, and as such, $\strategyi^b \in \strategySetNonDominatedi $ as required. 
    
    Assume the inductive hypothesis for $| \strategySeti|=k$. 
    
    Now, given this assumption, we must prove our hypothesis holds for $| \strategySeti|=k+1$. Consider a strategy $\strategyi^c \in  \strategySeti$. We need to prove either $\strategyi^c \in \strategySetNonDominatedi$, or there exists $ \, \strategy \in \strategySetNonDominatedi$ with $\strategy \strictlyDominates \strategyi^c$.
    If $\strategyi^c$ is not strictly dominated by any strategy $\strategy \in \strategySeti$, then $\strategyi^c \in \strategySetNonDominatedi$. 
    
    Assume instead there exists some strategy $\strategyi^a \in \strategySeti$, $\strategyi^a \strictlyDominates \strategyi^c$. Consider $Z_i=  \strategySeti \backslash \{ \strategyi^c\}$.
    By the inductive assumption, either $\strategyi^a \in Z_i^\notStrictlyDominated$, or there exists $ \, \strategyi^b \in Z_i^\notStrictlyDominated$ such that $\strategyi^b \strictlyDominates  \strategyi^a$. 
    If $\strategyi^a \in Z_i^\notStrictlyDominated$, then $\strategyi^a \in \strategySetNonDominatedi$, which implies there exists $ \, \strategy \in \strategySetNonDominatedi$ such that $ \strategy \strictlyDominates \strategyi^c$. 
    Otherwise, if $\strategyi^a \not \in Z_i^\notStrictlyDominated$, there exists $ \,  \strategyi^b \in Z_i^\notStrictlyDominated$, with $\strategyi^b \strictlyDominates \strategyi^a$. 
    As $\strategyi^b \strictlyDominates \strategyi^a$, and $\strategyi^a \strictlyDominates \strategyi^c$, this implies $\strategyi^b \strictlyDominates \strategyi^c$. 
    As $\strategyi^b \in Z_i^\notStrictlyDominated$, and $\strategyi^b \strictlyDominates \strategyi^c$, this implies $\strategyi^b \in \strategySetNonDominatedi$. Therefore, there exists $ \,  \strategy \in \strategySetNonDominatedi$ such that $ \strategy \strictlyDominates \strategyi^c$.

\end{proof}

As rational players choose uniformly at random from all mixed strategies which maximise utility, from Lemma \ref{lem:dominantXOR} for a rational player $\playeri$ these mixed strategies will be contained in $\strategySetNonDominatedi$. Moreover, Definition \ref{def:PlayerModel} states that $\playeri$ chooses from these mixed strategies in $\strategySetNonDominatedi$ with uniform probability. Therefore, to ensure rational players follow $\strategyProtocol$ with probability at least $\approxOne$, we must identify the conditions where for any rational player $\playeri$, $\strategySetNonDominatedi = \{ \strategyProtocol \}$. We state this explicitly in Observation \ref{obs:rationalChoices}.

\begin{observation}\label{obs:rationalChoices}
    A rational player $\playeri$ follows $\strategyProtocol$ with probability greater than $\approxOne$ if and only if $\strategySetNonDominatedi = \{ \strategyProtocol \}$. 
\end{observation}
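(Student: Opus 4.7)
The plan is to prove the biconditional by establishing each direction separately, drawing on Lemma \ref{lem:dominantXOR} for the dominance structure and on Definition \ref{def:PlayerModel} for the rational selection rule. Throughout I would read the set equality $\strategySetNonDominatedi = \{\strategyProtocol\}$ modulo the equivalence relation $\equiv$: it asserts that every strategy in $\strategySetNonDominatedi$ is equivalent to $\strategyProtocol$. This reading is consistent with Definition \ref{def:strictDominate}, under which equivalent strategies share the same expected utility and so neither strictly dominates the other.

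For the ``if'' direction, I would assume $\strategySetNonDominatedi = \{\strategyProtocol\}$. By Lemma \ref{lem:dominantXOR}, every strategy available to $\playeri$ either lies in $\strategySetNonDominatedi$ or is strictly dominated by some element of it, so the utility-maximising selection in Definition \ref{def:PlayerModel} forces the rational player to draw from $\strategySetNonDominatedi$. The uniform random choice then returns a strategy equivalent to $\strategyProtocol$, and the definition of $\equiv$ guarantees that its induced actions coincide with those of $\strategyProtocol$ across all polynomially many rounds, so $\playeri$ follows $\strategyProtocol$ with probability $1 > \approxOne$.

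For the ``only if'' direction, I would argue by contraposition: assume $\strategySetNonDominatedi \neq \{\strategyProtocol\}$ and conclude that $\playeri$ fails to follow $\strategyProtocol$ with probability greater than $\approxOne$. I would split into two cases. If no strategy equivalent to $\strategyProtocol$ lies in $\strategySetNonDominatedi$, Lemma \ref{lem:dominantXOR} supplies some $\strategy \in \strategySetNonDominatedi$ with $\strategy \strictlyDominates \strategyProtocol$, and since rational play never selects a strictly dominated strategy, $\playeri$ follows $\strategyProtocol$ with probability $0$. Otherwise, some $\strategy' \in \strategySetNonDominatedi$ satisfies $\strategy' \notequiv \strategyProtocol$ alongside the $\strategyProtocol$-class, so $\strategySetNonDominatedi$ contains at least two $\equiv$-classes and the uniform random choice places probability at most $1/2$ on the $\strategyProtocol$-class, bounded away from $\approxOne$ for sufficiently large $\secParam$. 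The main obstacle I anticipate is precisely this equivalence-class bookkeeping: one must verify that ``follows $\strategyProtocol$'' is measured modulo $\equiv$ and that ``uniformly at random'' is understood over $\equiv$-classes rather than over raw mixed-strategy representatives, for otherwise a large class of $\strategyProtocol$-equivalent strategies could mask a lone deviating class and collapse the ``only if'' direction.
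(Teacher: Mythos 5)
Your proof is correct and follows essentially the same route as the paper, which justifies this Observation informally in the preceding paragraph by combining Lemma \ref{lem:dominantXOR} with the uniform-selection clause of Definition \ref{def:PlayerModel}. Your explicit case split in the ``only if'' direction and your insistence on reading $\strategySetNonDominatedi = \{\strategyProtocol\}$ modulo the equivalence relation $\equiv$ are refinements the paper leaves implicit rather than a different argument.
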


The precise conditions where $\strategySetNonDominatedi = \{ \strategyProtocol \}$ for a rational player $\playeri$ are identified in Lemma \ref{lem:SICIEiffNSD}.

\begin{lemma}\label{lem:SICIEiffNSD}
    For an SMR protocol $\protocol$ and a rational player $\playeri$, $\strategySetNonDominatedi= \{ \strategyProtocol \}$ if and only if $\protocol$ is strong incentive compatible in expectation.
\end{lemma}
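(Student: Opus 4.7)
The plan is to prove each direction of the biconditional separately, using Lemma \ref{lem:dominantXOR} for one direction and the asymmetry of $\strictlyDominates$ for the other. The asymmetry I will rely on is that $\strategyi^a \strictlyDominates \strategyi^b$ precludes $\strategyi^b \strictlyDominates \strategyi^a$; this is immediate from Definition \ref{def:strictDominate}, since the former produces a strict ordering $\expectedUtilityiRound(\typeir, \strategyi^a) > \expectedUtilityiRound(\typeir, \strategyi^b)$ for all sufficiently large $\roundprime$, incompatible with the reverse strict inequality.

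For the direction $\SICIE \Rightarrow \strategySetNonDominatedi = \{\strategyProtocol\}$, I would first take any $\strategyi \in \strategySeti$ with $\strategyi \notequiv \strategyProtocol$; by $\SICIE$ we have $\strategyProtocol \strictlyDominates \strategyi$, so $\strategyi$ is strictly dominated and therefore $\strategyi \notin \strategySetNonDominatedi$. It then remains to place $\strategyProtocol$ itself inside $\strategySetNonDominatedi$. Suppose for contradiction that some $\strategy^\ast \in \strategySeti$ satisfies $\strategy^\ast \strictlyDominates \strategyProtocol$. Equivalent strategies deliver identical expected utilities, so strict dominance rules out equivalence; hence $\strategy^\ast \notequiv \strategyProtocol$, and applying $\SICIE$ to $\strategy^\ast$ gives $\strategyProtocol \strictlyDominates \strategy^\ast$, contradicting asymmetry.

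The reverse direction $\strategySetNonDominatedi = \{\strategyProtocol\} \Rightarrow \SICIE$ reduces almost immediately to Lemma \ref{lem:dominantXOR}. Assuming the singleton hypothesis, pick any $\strategyi \in \strategySeti$ with $\strategyi \notequiv \strategyProtocol$. Then $\strategyi \notin \strategySetNonDominatedi$, so Lemma \ref{lem:dominantXOR} supplies some $\strategy \in \strategySetNonDominatedi$ with $\strategy \strictlyDominates \strategyi$; the singleton hypothesis forces $\strategy = \strategyProtocol$, delivering exactly the strict dominance required by the definition of $\SICIE$. The boundary case $|\strategySeti| = 1$, in which Lemma \ref{lem:dominantXOR} does not apply, is harmless: both sides of the biconditional are vacuously true.

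The main subtlety I anticipate is purely notational: the set equality $\strategySetNonDominatedi = \{\strategyProtocol\}$ is best read up to $\equiv$, because any strategy equivalent to $\strategyProtocol$ yields identical expected utilities at every round and is therefore also undominated. Fixing this convention at the outset is essential; with it in place, the only thing to watch is the distinction between $\neq$ and $\notequiv$ when invoking $\SICIE$, so that an equivalent copy of $\strategyProtocol$ is never mistakenly treated as a non-equivalent competitor to itself.
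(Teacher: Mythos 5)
Your proof is correct and follows essentially the same route as the paper's: the reverse direction is the identical application of Lemma \ref{lem:dominantXOR}, and the forward direction differs only in that you establish $\strategyProtocol \in \strategySetNonDominatedi$ directly from the asymmetry of $\strictlyDominates$ where the paper again invokes Lemma \ref{lem:dominantXOR}. If anything, your explicit treatment of the equivalence-class reading of the singleton and of the asymmetry argument is more careful than the paper's rather terse forward direction.
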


\begin{proof}
    If an SMR protocol $\protocol$ is $\SICIE$, then for any rational player $\playeri$,  $\strategyProtocol$ strictly dominates all other strategies in expectation. From Lemma \ref{lem:dominantXOR}, this implies $\strategySetNonDominatedi= \{\strategyProtocol\}$. 
    
    Now we need to show if  $\strategySetNonDominatedi= \{ \strategyProtocol \}$, then $\protocol$ is $\SICIE$. From Lemma \ref{lem:dominantXOR}, we know for any strategy $\strategyi^a$, either $\strategyi^a \in \strategySetNonDominatedi $ or there is some $ \strategyi^b \in \strategySetNonDominatedi$ such that $\strategyi^b \strictlyDominates \strategyi^a$. As the only strategy in $\strategySetNonDominatedi$ is $ \strategyProtocol$, this implies for any strategy $\strategyi^a \, \notequiv \, \strategyProtocol$, $\strategyProtocol \strictlyDominates \strategyi^a$. This implies $\protocol$ is $\SICIE$, as required. 
\end{proof}

\begin{corollary}\label{cor:stategyNotStrictlyDominant}
    For an SMR protocol $\protocol$ and a rational player $\playeri$, $P(\playeri \, \text{chooses} \, \strategyProtocol) > \approxOne$ if and only if $\protocol$ is strong incentive compatible in expectation.
\end{corollary}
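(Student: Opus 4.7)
The plan is to observe that the corollary is an immediate chaining of two biconditionals that have already been established immediately above it, namely Observation \ref{obs:rationalChoices} and Lemma \ref{lem:SICIEiffNSD}. No new machinery is needed; the work consists solely in transporting the probabilistic statement into the language of $\strategySetNonDominatedi$ and then invoking the characterisation of $\SICIE$.

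Concretely, I would first unpack the left-hand side: ``$\playeri$ chooses $\strategyProtocol$ with probability greater than $\approxOne$'' is precisely the conclusion of Observation \ref{obs:rationalChoices}, which gives the equivalence $P(\playeri \text{ chooses } \strategyProtocol) > \approxOne \iff \strategySetNonDominatedi = \{\strategyProtocol\}$. This step is essentially a re-reading, relying on the fact that rational players (by Definition \ref{def:PlayerModel}) choose uniformly at random from their non-strictly-dominated strategies, so the protocol strategy is picked with probability greater than $\approxOne$ exactly when it is the unique element of $\strategySetNonDominatedi$.

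Next I would invoke Lemma \ref{lem:SICIEiffNSD}, which states $\strategySetNonDominatedi = \{\strategyProtocol\}$ if and only if $\protocol$ is $\SICIE$. Composing the two equivalences yields
\begin{equation*}
  P(\playeri \text{ chooses } \strategyProtocol) > \approxOne \iff \strategySetNonDominatedi = \{\strategyProtocol\} \iff \protocol \text{ is } \SICIE,
\end{equation*}
which is exactly the corollary.

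There is essentially no obstacle here, since the corollary is packaged precisely to be a one-line consequence of the preceding lemma and observation. The only subtle point worth flagging in the write-up is the implicit use of the equivalence relation $\equiv$ on strategies: a rational player's ``choosing $\strategyProtocol$'' should be read up to the equivalence $\equiv$, so that strategies equivalent to $\strategyProtocol$ but not syntactically identical are still counted as following the protocol. This is already implicit in Definition \ref{def:SICIE} and Lemma \ref{lem:SICIEiffNSD}, so no additional argument is required.
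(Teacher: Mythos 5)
Your proposal is correct and matches the paper's own proof, which likewise derives the corollary by composing Observation \ref{obs:rationalChoices} with Lemma \ref{lem:SICIEiffNSD}; you have simply spelled out the chaining more explicitly. No issues.
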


\begin{proof}
    Follows from Observation \ref{obs:rationalChoices} and Lemma \ref{lem:SICIEiffNSD}.
\end{proof}

This allows us to prove $\SICIE$ is a necessary property to achieve $\ourSMR$.

\begin{lemma}\label{lem:SICIENecessary}
    For an SMR protocol $\protocol$, if $\protocol$ is not strong incentive compatible in expectation, then $\protocol$ fails in the $\ourBAR$ model.
\end{lemma}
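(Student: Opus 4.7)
The plan is to derive the result by contrapositive from Corollary \ref{cor:stategyNotStrictlyDominant} and then exhibit a single adversarial configuration that violates the condition in Definition \ref{def:SMRFails}.

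First, assuming $\protocol$ is not $\SICIE$, the contrapositive of Corollary \ref{cor:stategyNotStrictlyDominant} yields a rational player $\playeri$ whose probability of selecting a strategy equivalent to $\strategyProtocol$ is not greater than $\approxOne$. Equivalently, there is a non-negligible function $\mu$ of $\secParam$ such that $\playeri$ picks some $\strategyi \, \notequiv \, \strategyProtocol$ with probability at least $\mu$.

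Next, I would exploit the universal quantification over $\share_\adversary^\initialisationRound < \adversarialBound$ embedded in Definition \ref{def:SMRFails}: to prove failure it is enough to produce a single adversarial initial share under which $p^\roundprime_\protocol$ is not greater than $\approxOne$ at some polynomial round $\roundprime$. Let $s_i > 0$ be $\playeri$'s initial share and pick any $\share_\adversary^\initialisationRound$ in the non-empty open interval $(\adversarialBound - s_i, \adversarialBound)$. Since Byzantine players are unbound to $\strategyProtocol$, the worst-case adversary does not follow it, so any group of players whose strategies are all $\equivR \strategyProtocol$ must consist of non-adversarial, non-$\playeri$ players. Whenever $\playeri$ chooses a non-equivalent strategy (probability at least $\mu$), the maximum stake of such a group is $1 - \share_\adversary^\initialisationRound - s_i < 1 - \adversarialBound$, so no qualifying majority exists. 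Hence at every polynomial round $\roundprime \geq \initialisationRound$, $p^\roundprime_\protocol \leq 1 - \mu$, which is not greater than $\approxOne$, and $\protocol$ fails in the $\ourBAR$ model.

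The main obstacle is pinning down an adversarial share precisely enough that a single rational deviator tips the total below $1 - \adversarialBound$; once $s_i > 0$ is fixed, this amounts to the routine interval argument above. The heavier lifting has already been carried out in the chain of results culminating in Corollary \ref{cor:stategyNotStrictlyDominant}, which converts the game-theoretic assumption that $\protocol$ is not $\SICIE$ into a quantitative statement about the probability that $\playeri$ fails to follow $\strategyProtocol$.
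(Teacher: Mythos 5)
Your proof is correct and follows essentially the same route as the paper's: apply the contrapositive of Corollary \ref{cor:stategyNotStrictlyDominant} to obtain a rational player who deviates from $\strategyProtocol$ with non-negligible probability, then choose the adversarial initial share near-maximal so that the adversary together with this deviating player control at least $\adversarialBound$ of the stake, forcing $p^\roundprime_\protocol$ below $\approxOne$. Your explicit interval $(\adversarialBound - s_i, \adversarialBound)$ is simply a more careful rendering of the paper's appeal to a ``maximal'' adversary, which is a welcome refinement rather than a departure.
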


\begin{proof}
    Consider such a protocol $\protocol$. As a consequence of not $\SICIE$, for a rational player $\playeri$, this means $P(\playeri \, \text{chooses} \, \strategyProtocol)$ is not greater than $\approxOne$, applying Corollary \ref{cor:stategyNotStrictlyDominant}. 
    From Definition \ref{def:SMRFails} we are required to consider  $\share^\initialisationRound_\adversary$ maximal. Given this rational $\playeri$ and a maximal adversary, there is now players controlling greater than or equal to $\adversarialBound$ of the total stake who will not choose a strategy equivalent to $\strategyProtocol$ with non-negligible probability in $\kappa$. Using the notation of Definition \ref{def:SMRFails}, this means $p^\round_{\protocol}$ is not greater than $\approxOne$ for some $\round \geq \initialisationRound$, which implies $\protocol$ fails in the $\ourBAR$ model.
\end{proof}

Using similar arguments, we are able to prove fairness is also necessary for a protocol to achieve $\ourSMR$.

\begin{lemma}\label{lem:fairNecessary}
    For an SMR protocol $\protocol$, if $\protocol$ is not fair then $\protocol$ fails in the $\ourBAR$ model. 
\end{lemma}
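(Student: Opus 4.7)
The plan is to mirror the structure of Lemma \ref{lem:SICIENecessary}: unfold the negation of fairness to identify a non-negligible adversarial growth event, then argue this event forces the adversary's share above the critical threshold $\adversarialBound$, at which point no coalition honouring $\strategyProtocol$ can control more than $1-\adversarialBound$ of the stake. This gives the failure condition of Definition \ref{def:SMRFails} directly.

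First I would negate Definition \ref{def:fair}: if $\protocol$ is not fair, then there exist a round $\round \geq \initialisationRound$, an adversarial strategy, and an initial corrupted set with $\share^\initialisationRound_\adversary < \adversarialBound$ such that $P(\share^\round_\adversary > \share^\initialisationRound_\adversary)$ is not greater than $\approxOne$, and is therefore at least $1/\poly$. In words, the adversary possesses an attack that, with non-negligible probability, strictly grows its share of stake by round $\round$.

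The main obstacle is lifting this strict-increase event to one that crosses the $\adversarialBound$ boundary, since a generic increase $\share^\round_\adversary > \share^\initialisationRound_\adversary$ could a priori be arbitrarily small. I would resolve this by exploiting the adversary's freedom in the $\ourBAR$ model to choose which players to initially corrupt, subject only to $\share^\initialisationRound_\adversary < \adversarialBound$. Because shares are determined by the discrete stake holdings of the finite player set, any strict increase of the adversarial share must be by at least some positive minimum $\delta > 0$ fixed by the stake granularity of the protocol. Instantiating the growth attack from an initial corruption with $\share^\initialisationRound_\adversary > \adversarialBound - \delta$, which is still strictly below $\adversarialBound$, then guarantees $\share^\round_\adversary \geq \adversarialBound$ whenever the growth event occurs. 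Hence there exist $\share^\initialisationRound_\adversary < \adversarialBound$ and a round $\round$ with $P(\share^\round_\adversary \geq \adversarialBound) \geq 1/\poly$.

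Finally, conditioning on the event $\share^\round_\adversary \geq \adversarialBound$, the non-adversarial players jointly control at most $1 - \adversarialBound$ of the stake. Since the Byzantine adversary is free to deviate from any strategy equivalent to $\strategyProtocol$, no set of players following some $\strategy \equivR \strategyProtocol$ can exceed $1-\adversarialBound$ in stake on this event. Therefore the event defining $p^\round_{\protocol}$ in Definition \ref{def:SMRFails} fails with probability at least $1/\poly$, so $p^\round_{\protocol}$ is not greater than $\approxOne$, and Definition \ref{def:SMRFails} yields that $\protocol$ fails in the $\ourBAR$ model, as required.
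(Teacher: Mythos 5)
Your proof follows essentially the same route as the paper's: negate Definition \ref{def:fair}, push the initial adversarial share to its maximal value below $\adversarialBound$ so that the non-negligible growth event carries $\share^\round_\adversary$ to at least $\adversarialBound$, and then invoke Definition \ref{def:SMRFails} since the arbitrary Byzantine coalition leaves at most $1-\adversarialbound$ of the stake available to follow $\strategyProtocol$. Your explicit granularity argument for why a strict increase from a near-maximal share must cross the $\adversarialBound$ threshold is a welcome refinement of a step the paper's proof leaves implicit, but it does not change the structure of the argument.
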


\begin{proof}
    If $\protocol$ is not fair, there exists $ \, \round \geq \initialisationRound$ such that $P(\share^\round_\adversary > \share^\initialisationRound_\adversary)$ is not negligible in $ \kappa$. From Definition \ref{def:SMRFails}, we are required to consider the case where $\share^\initialisationRound_\adversary$ is maximal. In this case, the probability that the adversary controls greater than or equal to $\adversarialBound$ of the stake at round $\round$ is is non-negligible in $ \kappa$ given $P(\share^\round_\adversary > \share^\initialisationRound_\adversary)$ is non-negligible in $ \kappa$.  
    Given the uniform strategy selection probability of Byzantine players across all possible strategies, this implies that $p^\round_{\protocol}$ is not greater than $\approxOne$. 
    Therefore, $\protocol$ fails in the $\ourBAR$ model. 
\end{proof}

Collecting the results of this section, with some additional proof-work, we are equipped to prove the main theorem of the paper, Theorem \ref{thm:fair+SICIE=Necessary+Sufficient}. 

\begin{theorem}\label{thm:fair+SICIE=Necessary+Sufficient}
    For an SMR protocol $\protocol$, $\protocol$ achieves $\ourSMR$ if and only if $\protocol$ is strong incentive compatible in expectation and fair. 
\end{theorem}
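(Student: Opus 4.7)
The statement is a biconditional, so I split into necessity ($\Rightarrow$) and sufficiency ($\Leftarrow$). The necessity direction is a short assembly of the two lemmas already established in the section. The sufficiency direction is the substantive one: I need to turn per-round, per-player guarantees coming from $\SICIE$ and fairness into the uniform ``majority-of-stake follows $\strategyProtocol$ over any polynomial window'' bound required by Definition~\ref{def:SMRFails}.

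\textbf{Necessity ($\Rightarrow$).} I argue by contrapositive. Suppose $\protocol$ is either not $\SICIE$ or not fair. In the first case, Lemma~\ref{lem:SICIENecessary} immediately yields that $\protocol$ fails in the $\ourBAR$ model. In the second case, Lemma~\ref{lem:fairNecessary} does the same. Hence if $\protocol$ achieves $\ourSMR$, it must be both $\SICIE$ and fair.

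\textbf{Sufficiency ($\Leftarrow$).} Assume $\protocol$ is $\SICIE$ and fair. Fix an arbitrary round $\roundprime$ polynomial in $\secParam$; I must show $p^\roundprime_\protocol > \approxOne$, where $\share^\initialisationRound_\adversary < \adversarialBound$ is taken worst-case. The argument proceeds in two steps and then a union bound. First, by fairness, for each round $\round \leq \roundprime$ we have $P(\share^\round_\adversary \leq \share^\initialisationRound_\adversary) > \approxOne$, so except on an event of probability $\negligible$ the rational players collectively hold strictly more than $1-\adversarialBound$ of the stake at every round up to $\roundprime$. Second, by $\SICIE$ and Corollary~\ref{cor:stategyNotStrictlyDominant}, each individual rational player $\playeri$ chooses a strategy equivalent to $\strategyProtocol$ with probability greater than $\approxOne$. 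Taking a union bound over the (at most $\numPlayers$) rational players and over the $\roundprime$ rounds, the event that every rational player follows a strategy $\strategy \equivR \strategyProtocol$ up to round $\roundprime$ holds with probability at least $1 - \numPlayers \cdot \roundprime \cdot \negligible$, which is still $\approxOne$ since the product of a polynomial and a negligible function is negligible. Conjoining this with the fairness event (again via union bound), the compound event ``rational stake exceeds $1-\adversarialBound$ and every rational player plays a $\strategyProtocol$-equivalent strategy throughout rounds $\initialisationRound, \ldots, \roundprime$'' holds with probability greater than $\approxOne$. On that event the players following $\strategyProtocol$-equivalent strategies control more than $1-\adversarialBound$ of the total stake, which is exactly the condition defining $p^\roundprime_\protocol$, so $p^\roundprime_\protocol > \approxOne$ as required.

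\textbf{Main obstacle.} The delicate step is propagating the per-player, per-round guarantees from $\SICIE$ and fairness into a uniform bound over the whole polynomial horizon without the slack blowing up. The key observation is that both hypotheses give guarantees of the form $1 - \negligible$, and polynomially many union-bound applications preserve this form; additionally, the target condition asks for a stake-weighted majority rather than all non-adversarial players, so even minor (negligible-probability) failures of an individual rational player are absorbed as long as the remaining rational stake still exceeds $1-\adversarialBound$, which is guaranteed so long as no single rational player controls too much stake (this is supplied by the standing assumption $\shareH_i < \tfrac{1}{2}$ together with $\share^\initialisationRound_\adversary < \adversarialBound$). Once these quantifier-order and union-bound details are aligned with Definition~\ref{def:SMRFails}, the conclusion follows.
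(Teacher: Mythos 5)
Your proof is correct and follows essentially the same route as the paper: necessity via the contrapositives of Lemmas \ref{lem:SICIENecessary} and \ref{lem:fairNecessary}, and sufficiency by combining Corollary \ref{cor:stategyNotStrictlyDominant} with the fairness definition to conclude that a $(1-\adversarialBound)$-stake majority follows $\strategyProtocol$ with probability greater than $\approxOne$. Your explicit union bounds over players and rounds merely make rigorous a step the paper's proof leaves implicit, so the two arguments are substantively identical.
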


\begin{proof}
    For an SMR protocol $\protocol$, we will first prove that if $\protocol$ achieves $\ourSMR$ then $\protocol$ is $\SICIE$ and fair.
    Using the contrapositive of Lemma \ref{lem:SICIENecessary}, we have that if $\protocol$ achieves $\ourSMR$ (does not fail in the $\ourBAR$ model), then $\protocol$ is $\SICIE$. 
    Similarly, using the contrapositive of Lemma \ref{lem:fairNecessary}, we have that if  $\protocol$ achieves $\ourSMR$, then $\protocol$ is fair.
    
    We now need to prove if $\protocol$ is $\SICIE$ and fair then $\protocol$ achieves $\ourSMR$.
    By $\SICIE$ and Corollary \ref{cor:stategyNotStrictlyDominant}, this implies all rational players will always choose $\strategyProtocol$. Furthermore, as $\protocol$ is fair, from Definition \ref{def:fair}, we know rational players will maintain greater than $1-\adversarialBound$ of the stake in every round with probability greater than $\approxOne$. Therefore, we have players controlling greater than $1-\adversarialBound$ of the stake who will follow $\strategyProtocol$ with probability greater than $ \approxOne$, which is precisely the definition of $\protocol$ achieving $\ourSMR$ from Definition \ref{def:SMRFails}.
\end{proof}

This crucial theorem completes the first part of the paper, identifying the properties of $\SICIE$ and fairness as both necessary, and together sufficient, for a protocol to achieve $\ourSMR$, independently of network assumptions and adversarial capabilities. We now proceed to outline the $\protocolName$ protocol, demonstrating that it is possible to satisfy $\SICIE$ and fairness in the $\ourBAR$ model.
\newcommand{\Lim}[1]{\raisebox{0.5ex}{\scalebox{0.8}{$\displaystyle \lim_{#1}\;$}}}

\section{\texorpdfstring{$\protocolName$}{TEXT}}\label{sec:Protocol}

In this section, we provide the encoding of $\protocolName$, and give an overview of the main differences between the $\protocolName$ protocol and Tendermint. We assume a partially synchronous network communication model as in Tendermint \cite{Tendermint}. Players are connected to nodes in a dynamic wide area network, with each node having direct connections to a subset of all other nodes, forming a sparsely connected graph of communication channels between nodes. Non-Byzantine player messages are transmitted through \textit{gossiping}; players send a message to neighbouring nodes, who echo messages to their neighbours until all nodes eventually receive the message. Formally, there is global stabilisation round $\GST>0$, such that all messages sent at round $\sendTime>0$ are delivered by round $\deliverTime= max(\sendTime, \GST) +\Delta$ for some unknown number of rounds $\Delta>0$. 

We assume rational players are aware that there is a fixed, but unknown, upperbound $\Delta$ on message delivery between players in synchrony, which we refer to as $\Delta$-synchrony, but are unaware of how many players are in $\Delta$-synchrony at any given time. For a message $m$, a call to $\Broadcast(m)$ sends a $m$ to all players, including oneself, under the same gossiping specification. This is partial synchrony as defined in \cite{PartialSynchronyDwork}.

\subsection{Threat Model}\label{sec:ThreatModel}

In $\protocolName$, protocol actions take negligible amounts of time compared to network delays, so if all non-Byzantine players behave correctly and receive the same sequence of messages their machines will be in the same state. Rational players ignore messages which have not been signed using a protocol-associated private key.

We consider an adversary $\adversary$ with the following properties:

\begin{enumerate}

    \item $\adversary$ can read all messages sent by non-Byzantine parties, but cannot existentially forge signatures. 
    
    \item $\adversary$ can control and coordinate all Byzantine players in any way, with unknown utility function. 
    
    \item  At initialisation we have $\partialSyncAdvLimit-\delta <\share^1_\adversary<\partialSyncAdvLimit=\adversarialBound$, for some $\delta >0$, in line with the partially synchronous network distribution limits \cite{PartialSynchronyDwork}. 
    
    \item At initialisation, $\adversary$ can choose to corrupt any  $1 \leq \numByzantines<\numPlayers-2$ players, say $\player_{1}, ..., \player_{\numByzantines}$ with shares $\share^\initialisationHeight_{1}, ..., \share^\initialisationHeight_{\numByzantines}$, such that $\sum^{\numByzantines}_{i=1}\share^\initialisationHeight_i=\share^\initialisationHeight_\adversary$. 
    
    \item Given $\adversary$ corrupts players $\player_1,..., \player_\numByzantines $ as Byzantine for consensus on a block at height $\blockchainHeight$ with shares $ \share_1^{\blockchainHeight-1},...,$ $ \share_\numByzantines^{\blockchainHeight-1} $, the adversarial share at the proceeding height is calculated as $\shareH_\adversary=\sum^\numByzantines_{i=1}\shareH_i$.

\end{enumerate}

\begin{remark}\label{rem:StaticAdv}
    In this work, we focus on static adversaries. It is possible to extend our results to an adaptive adversary who can re-select the set of Byzantine players after every decision. To do so requires significant additional code and further assumptions that preserve adversarial stake throughout corruptions. We choose to leave this as future work, as it only stands to detract from the primary focus of the paper, that is, to demonstrate the importance of ByRa SMR and how it can be achieved in real-world protocols with $\protocolName$. 
\end{remark}

\algdef{SE}[EVENT]{Upon}{EndUpon}[1]{\textbf{upon}\ #1\ \algorithmicdo}{\algorithmicend}\algtext*{EndUpon}

\newcommand{\newconstruct}[5]{%
  \newenvironment{ALC@\string#1}{\begin{ALC@g}}{\end{ALC@g}}
   \newcommand{#1}[2][default]{\ALC@it#2\ ##2\ #3%
     \ALC@com{##1}\begin{ALC@\string#1}}
   \ifthenelse{\boolean{ALC@noend}}{
     \newcommand{#4}{\end{ALC@\string#1}}
   }{
     \newcommand{#4}{\end{ALC@\string#1}\ALC@it#5}
   } 
}

\newcommand\Disseminate{\textbf{Disseminate}}

\newcommand\Proposal{\mathsf{PROPOSAL}}
\newcommand\ProposalPart{\mathsf{PROPOSAL\mbox{-}PART}}
\newcommand\PrePrepare{\mathsf{INIT}} \newcommand\Prevote{\mathsf{PREVOTE}}
\newcommand\Precommit{\mathsf{PRECOMMIT}}
\newcommand\Decision{\mathsf{DECISION}}
\newcommand\Slash{\mathsf{SLASH}}

\newcommand\ViewChange{\mathsf{VC}}
\newcommand\ViewChangeAck{\mathsf{VC\mbox{-}ACK}}
\newcommand\NewPrePrepare{\mathsf{VC\mbox{-}INIT}}
\newcommand\coord{\mathsf{proposer}}

\newcommand\newHeight{newHeight} \newcommand\newRound{newRound}
\newcommand{\nil}{nil} \newcommand\id{id} 
\newcommand{\assign}{\leftarrow}
\newcommand{\inc}[1]{#1 \assign #1 + 1}
\newcommand{\isdef}{:=}
\newcommand{\ident}[1]{\mathit{#1}}
\def\newident#1{\expandafter\def\csname #1\endcsname{\ident{#1}}}

\newcommand{\eg}{{\it e.g.}}
\newcommand{\ie}{{\it i.e.}}
\newcommand{\apriori}{{\it apriori}}
\newcommand{\etal}{{\it et al.}}

\newcommand{\propose}{\textit{propose}}
\newcommand\prevote{\textit{prevote}}
\newcommand\prevoteWait{prevoteWait}
\newcommand\precommit{\textit{precommit}}
\newcommand\precommitWait{precommitWait}
\newcommand\commit{\textbf{\textit{commit}}}
\newcommand{\upon}{\textbf{upon}}

\newcommand\timeoutPropose{timeout}
\newcommand\timeoutPrevote{timeout}
\newcommand\timeoutPrecommit{timeout}
\newcommand\proofOfLocking{proof\mbox{-}of\mbox{-}locking}
\newcommand{\SPACE}{\vspace{3mm}}
\newcommand{\SHORTSPACE}{\vspace{1mm}}

\newcommand{\playerHeight}{\heightVariable_i}
\newcommand{\playerEpoch}{\epoch_i}
\newcommand{\playerStep}{\step_i}
\newcommand{\playerBlockchain}{\blockchain_i}
\newcommand{\lockedValue}{\textit{lockValue}_i}
\newcommand{\lockedEpoch}{\textit{lockEpoch}_i}
\newcommand{\validValue}{\textit{validValue}_i}
\newcommand{\validEpoch}{\textit{validEpoch}_i}
\newcommand{\proposerValidEpoch}{\textit{validEpoch}}
\newcommand{\prevoteProof}{\textit{prevoteProof}_i}
\newcommand{\precommitProof}{\textit{precommitProof}_i}
\newcommand{\proposedValue}{\textit{proposal}_i}
\newcommand{\logicalAnd}{\textbf{and}}
\newcommand{\logicalOr}{\textbf{or}}

\begin{algorithm}
\def\baselinestretch{1} \scriptsize \raggedright
\caption{$\protocolName$ protocol for a player $\playeri$}
\label{alg:tenderstake} 
	\begin{algorithmic}[1] 
	
		\Function{$Initialise(Genesis)$}{}\label{func:Initialise}
		    
		    \State $\playerBlockchain := [\textit{Genesis}]$ \Comment{$\playeri$'s blockchain as a vector} \label{line:blockchainDef}
		    \State $\playerHeight := \initialisationHeight$  \Comment{Tracks height of $\playerBlockchain$} 
		    \State $\playerEpoch := \initialisationEpoch$  
		    \State $\playerStep  \in \{\propose, \prevote, \precommit \}$ \label{line:variableDefStart}
		    \State $\lockedValue := nil$ 
		    \State $\lockedEpoch := -1$ 
		    \State $\validValue := nil$ 
		    \State $\validEpoch := -1$ 
		    \State $\playerStake := \textit{Genesis.stake}() $ \Comment{Total stake}\label{algo:Initialise:line:setStake}
		    \State $\playerShares := \textit{Genesis.shares}()  $ \Comment{Vector of player shares}\label{algo:Initialise:line:setShares}
		    \State $\playerReward :=\textit{ Genesis.reward}()$ \Comment{Per-Block reward}
		    \State $\playerDeviationProofs := [ nil \ for \ j \ \in \playerIndices ]$ \Comment{Deviation proofs} \label{line:variableDefEnd}
            \State $\prevoteProof := nil$ 
		    \State $\precommitProof := nil$

		\EndFunction
		
		\SHORTSPACE 
		\State \textbf{upon} start \textbf{do} $\textit{StartEpoch}(\initialisationEpoch)$ 
		
		\SHORTSPACE 
		\Function{ $StartEpoch$}{$\epoch$}  \label{line:tab:startEpoch} 
		\State $\playerEpoch \assign \epoch$ 
		\State $\playerStep \assign \propose$ 
		\If{$\coord(\playerHeight, \playerEpoch) = \playeri$}
		\If{$\validValue \neq \nil$} \label{line:tab:isThereLockedValue}
		\State $\proposedValue \assign \validValue$ 
		\Else 
		\State $\proposedValue \assign \textit{getValue}().\textit{include}( \playerDeviationProofs)$ 
		\label{line:tab:getValidValue} 
		\EndIf 	  
		\State \Broadcast $\langle\Proposal,\playerHeight, \playerEpoch, \proposedValue, \validEpoch,\prevoteProof \rangle$
		\label{line:tab:send-proposal} 
		\Else 
		\State \textbf{schedule} $\textit{OnTimeoutPropose}(\playerHeight,
		\playerEpoch)$ to be executed \textbf{after} $\timeoutPropose()$ \label{line:timeoutPropose}
		\EndIf
		\EndFunction

		\SPACE 
		\Upon {$\langle\Proposal,\playerHeight,\playerEpoch, \val, -1,\textit{proof}\rangle \ \With \ \textit{valid}(\textit{proof})$ \From\ $\coord(\playerHeight,\playerEpoch)$ \WhileLocal\ $\playerStep = \propose$  } \label{line:tab:recvProposal}
			\If{$\textit{valid}(\val) \ \logicalAnd \ (\lockedEpoch = -1  \ \logicalOr \ \lockedValue = \val$)}
			\label{line:tab:accept-proposal-2} 
			    
				\State \Broadcast \ $\langle\Prevote,\playerHeight,\playerEpoch,\val, \prevoteProof\rangle$  
				\label{line:tab:prevote-proposal}	
			\Else
			\label{line:tab:acceptProposal1}		
				\State \Broadcast \ $\langle\Prevote,\playerHeight,\playerEpoch,\nil, \prevoteProof\rangle$  
				\label{line:tab:prevote-nil}	
			\EndIf
				\State $\playerStep \assign \prevote$ \label{line:tab:setStateToPrevote1} 
		\EndUpon

		\SPACE 
		\Upon{$\langle\Proposal,\playerHeight,\playerEpoch, \val, \proposerValidEpoch,\textit{proofProposal}\rangle$ \With\ $\textit{ valid}(\textit{proofProposal}) $ \From\ $\coord(\playerHeight,\playerEpoch)$
			\logicalAnd $\ > \partialSyncMajority$ $\langle\Prevote,\playerHeight, \proposerValidEpoch,\val,\textit{proofPrevote}\rangle \ $ \With \ $ \textit{valid}(\textit{proofPrevote})$ \WhileLocal\ \big($\playerStep = \propose \ \logicalAnd \ (\proposerValidEpoch \ge 0 \ \logicalAnd \ \proposerValidEpoch < \playerEpoch) \big)$}
		\label{line:tab:acceptProposal} 
		\State $\prevoteProof \assign $ $ \textit{proof}( > \partialSyncMajority$ $\langle\Prevote,\playerHeight, \proposerValidEpoch, \val \rangle \cup \prevoteProof) $ \label{line:prevoteEx} 
		\If{$\textit{valid}(\val) \ \logicalAnd \ (\lockedEpoch < \proposerValidEpoch
			\ \logicalOr \ \lockedValue = \val)$} \label{line:tab:cond-prevote-higher-proposal}	
			
			\State \Broadcast \ $\langle\Prevote,\playerHeight,\playerEpoch,\val, \prevoteProof\rangle$
			\label{line:tab:prevote-higher-proposal}		 
		\Else
			\label{line:tab:acceptProposal2}		
			\State \Broadcast \ $\langle\Prevote,\playerHeight,\playerEpoch,\nil, \prevoteProof\rangle$  
			\label{line:tab:prevote-nil2}	
		\EndIf
		\State $\playerStep \assign \prevote$ \label{line:tab:setStateToPrevote3} 	 
		\EndUpon

		\SPACE 
		\Upon{$> \partialSyncMajority$ $\langle\Prevote,\playerHeight, \playerEpoch,*,\textit{proof}\rangle$ \With $\ \textit{valid}(\textit{proof})$ \WhileLocal\ $\playerStep = \prevote$ for the first time}
		\label{line:tab:recvAny2/3Prevote} 
		    \State $\precommitProof \assign \  \textit{proof}( > \partialSyncMajority \langle\Prevote,\playerHeight, \playerEpoch,*\rangle )$ \label{line:precommitEx} 
		    \State \textbf{schedule} $\textit{OnTimeoutPrevote}(\playerHeight, \playerEpoch)$ to be executed \textbf{after}  $\timeoutPrevote()$ \label{line:tab:timeoutPrevote} 
		\EndUpon

		\SPACE 
		\Upon{$\langle\Proposal,\playerHeight,\playerEpoch, \val, *, \textit{proofProposal}\rangle$  \ \With \ $ \textit{valid}(\textit{proofProposal})$ \From\ $\coord(\playerHeight,\playerEpoch)$ \logicalAnd \ $> \partialSyncMajority$ $\langle\Prevote,\playerHeight, \playerEpoch,\val,\textit{proofPrevote}\rangle \ \With \ \textit{valid}(\textit{proofPrevote})$ \WhileLocal\ $\textit{valid}(\val) \ \logicalAnd \ \playerStep \ge \prevote$ for the first time  }
		\label{line:tab:recGoodPrevotes} 
		\If{$\playerStep = \prevote$}	\label{line:tab:goodPrevotesAndStepPrevote}
			\State $\lockedValue \assign \val$                \label{line:tab:setLockedValue}
			\State $\lockedEpoch \assign \playerEpoch$   \label{line:tab:setLockedEpoch} 
			\State $\precommitProof \assign \ \textit{proof}(> \partialSyncMajority$ $\langle\Prevote,\playerHeight, \playerEpoch,\val\rangle)$
			\State \Broadcast \ $\langle\Precommit,\playerHeight,\playerEpoch,\val, \precommitProof\rangle$  
			\label{line:tab:precommit-v}	
			\State $\playerStep \assign \precommit$ \label{line:tab:setStateToCommit} 
		\EndIf 
		\State $\validValue \assign \val$ \label{line:tab:setValidEpoch} 
		\State $\validEpoch \assign \playerEpoch$ \label{line:tab:setValidValue} 
		\EndUpon

		\SHORTSPACE 
		\Upon{$> \partialSyncMajority$ $\langle \Prevote,\playerHeight,\playerEpoch, \nil, \textit{proof} \rangle  \ \With \ \textit{valid}(\textit{proof})$ \WhileLocal\ $\playerStep = \prevote$} \label{line:nilPrevotes}
			\State $\precommitProof \assign \ \textit{proof}(> \partialSyncMajority$ $\langle\Prevote,\playerHeight,\playerEpoch,\nil\rangle )$
			\State \Broadcast \ $\langle\Precommit,\playerHeight,\playerEpoch, \nil,\precommitProof\rangle$
			\label{line:tab:precommit-v-1} 
			\State $\playerStep \assign \precommit$ 
		\EndUpon

	\end{algorithmic} 
	
\end{algorithm}

\addtocounter{algorithm}{-1}
\begin{algorithm}
\def\baselinestretch{1} \scriptsize\raggedright
\caption{$\protocolName$ protocol (ctd.)}
	\label{alg:tenderstakectd} 
	\begin{algorithmic}[1]		
		
	\setcounter{ALG@line}{56}
	    \SPACE 
		\Upon{$> \partialSyncMajority$ $\langle\Precommit,\playerHeight,\playerEpoch,*, \textit{proof}\rangle  \ \With \ \textit{valid}(\textit{proof})$ for the first time} 
		\label{line:tab:recvAny2/3Precommit} 
		    \State $\prevoteProof \assign \ \textit{proof}( > \partialSyncMajority$ $\langle\Precommit,\playerHeight,\playerEpoch,*\rangle)$
			\State \textbf{schedule} $\textit{OnTimeoutPrecommit}(\playerHeight, \playerEpoch)$ to be executed \textbf{after} $\timeoutPrecommit()$ \label{line:timeoutPrecommit}
		\EndUpon

	    \SPACE 
		\Upon{$> \partialSyncMajority$ $\langle \Precommit,\playerHeight,\playerEpoch, \nil, \textit{proof} \rangle  \ \With \ \textit{valid}(\textit{proof})$ \WhileLocal\ $\playerStep = \precommit$}  \label{line:FailedEpoch} 
			\State $\textit{StartEpoch}(\playerEpoch + 1)$
		\EndUpon

		\SPACE 
		\Upon{$\langle\Proposal,\playerHeight,\epoch, \val, *, \textit{proofProposal}\rangle  \ \With \ \textit{valid}(\textit{proofProposal}) $ \From\ $\coord(\playerHeight,\epoch) $  \logicalAnd $\ > \partialSyncMajority$ $\langle\Precommit,\playerHeight,\epoch,\val, \textit{proofPrecommit}\rangle  \ \With \ \textit{valid}(\textit{proofPrecommit}) $} 
		\label{line:tab:onDecideRule} 
		    \State $\textit{newDeviators} \assign \val.\textit{deviators}() \backslash \playerBlockchain.\textit{deviators}() $
			\If{$\textit{valid}(\val)$} \label{line:validDecisionValue} 
			    \If{$|\textit{newDeviators}|>0 $ } \Comment{True if deviators in $\val$ not in $\playerBlockchain$} \label{line:conditionNewDevs}
			        \State $\textit{adjustForSlashing}( \textit{newDeviators})$
			        
			    \EndIf
			    \State $\prevoteProof \assign \ \textit{proof}( > \partialSyncMajority \langle\Precommit,\playerHeight,\epoch,\val\rangle)$
				\State $\playerBlockchain.\textit{append}(\val.include(\prevoteProof))$ \label{line:addBlock} 
				\State $\playerStake \assign \playerStake \ +  \playerReward$ \label{line:addReward} 
				\State $\playerHeight \assign \playerHeight + 1$ \label{line:increaseHeight} 
				\State reset $\lockedEpoch$, $\lockedValue$, $\validEpoch$,  $\validValue$ to initial values
				\State $\textit{StartEpoch}(1)$   	
			\EndIf 
		\EndUpon

		\SHORTSPACE 
		\Upon{$> \partialSyncAdvLimit$ $\langle*,\playerHeight,\textit{epoch}, *, \textit{proof}\rangle$ \With $\ (\textit{epoch} > \playerEpoch$  \logicalAnd $ \ valid(proof))$}
		\label{line:tab:skipEpochs} 
		    \State $\prevoteProof \assign \ \textit{proof}(> \partialSyncAdvLimit$ $\langle*,\playerHeight,\textit{epoch},*,*\rangle )$
			\State $\textit{StartEpoch}(\textit{epoch})$ \label{line:tab:nextEpoch2} 
		\EndUpon
		
		\SHORTSPACE 
		\Function{$OnTimeoutPropose(height,epoch)$}{} \label{line:tab:onTimeoutPropose} 
		\If{$\textit{height} = \playerHeight \ \logicalAnd \ \textit{epoch} = \playerEpoch \ \logicalAnd \ \playerStep = \propose$} 
		    
			\State \Broadcast \ $\langle\Prevote,\playerHeight,\playerEpoch, \nil,\prevoteProof\rangle$ 
		 	\label{line:tab:prevote-nil-on-timeout}	
		 	\State $\playerStep \assign \prevote$ 
		 \EndIf	
		 \EndFunction
		
		\SHORTSPACE 
		\Function{$OnTimeoutPrevote(height,epoch)$}{}  \label{line:tab:onTimeoutPrevote} 
		\If{$\textit{height} = \playerHeight \ \logicalAnd \ \textit{epoch} = \playerEpoch \ \logicalAnd \ \playerStep = \prevote$} 
		    
			\State \Broadcast \ $\langle\Precommit,\playerHeight,\playerEpoch,\nil,\precommitProof\rangle$   
			\label{line:tab:precommit-nil-onTimeout}
			\State $\playerStep \assign \precommit$ 
		\EndIf	
		\EndFunction
		
		\SHORTSPACE 
		\Function{$OnTimeoutPrecommit(height,epoch)$}{}  \label{line:tab:onTimeoutPrecommit} 
		\If{$\textit{height} = \playerHeight \ \logicalAnd \ \textit{epoch} = \playerEpoch$}
			\State $\textit{StartEpoch}(\playerEpoch + 1)$ \label{line:tab:nextEpoch} 
		\EndIf
		\EndFunction

		\SHORTSPACE 
		\Upon{$m$ \From \ $\player^j$ \With \ $\textit{valid}(m)= \textit{FALSE}$ for the first time } \label{line:deviatingStart}
		    \State $\playerDeviationProofs[j] \assign \textit{proof}(\textit{valid}(m)= \textit{FALSE})$
			\State \Broadcast \ $\langle \Slash,\player^j,\playerHeight, \playerEpoch, m, \playerDeviationProofs[j]\rangle$ \label{line:deviationIdentified}
		\EndUpon

		\SHORTSPACE 
		\Upon{$\langle \Slash,\player^j, m, \textit{proof}\rangle$ \ \From \ $\player^k \ \With \ \textit{valid}(\textit{proof})$}
		        \If{$ \playerDeviationProofs[j] = nil$}
		            \State $\playerDeviationProofs[j] \assign \textit{proof}$
		            \State \Broadcast \ $\langle \Slash,\player^j, m, \playerDeviationProofs[j]\rangle$ \label{line:deviationEchoed}	 
		        \EndIf
		\EndUpon	
			
		\SHORTSPACE 
		\Function{$adjustForSlashing(newDeviators)$}{}  \label{line:removeSlashers}
		    \State $\slashedShare \assign sum(\playerShares[newDeviators])$	
		    \State $\playerShares[newDeviators] \assign 0 $	 \label{line:destroyDeviatorStake}
		    \State $\playerShares \assign \big[ \frac{\playerShares[k]}{1-\slashedShare} \ \textbf{for} \ k \ \in [1,...,\numPlayers] \big]$ \label{line:adjustShares}
		    \State $\playerReward \assign (1-\slashedShare) \playerReward$ \Comment{Same reward adjustment as in FAIRSICAL} \label{line:adjustReward}
		    \State $\playerStake \assign (1-\slashedShare)\playerStake   $ \Comment{Remove deviating stake} \label{line:adjustStake}
		    \State $\playerStake \assign \playerStake  +  \textit{sum}([\textit{Genesis.shares}()[j] \ \textbf{for} \ j \in newDeviators]) \cdot \playerReward$  \label{line:splitDeviatorPot} \Comment{Slash Bonus, not dependant on ordering} 
		\EndFunction

	\end{algorithmic} 
\end{algorithm}

\subsection{Protocol Outline}\label{sec:protocolDescription}

We now describe the pseudocode of $\protocolName$ as outlined in Algorithm \ref{alg:tenderstake}. As the goal of Section \ref{sec:Protocol} is to amend Tendermint to achieve $\ourSMR$, readers of \cite{LatestGossipTendermint} will notice that we use large parts of the code and descriptions from that work. We describe the entire code here for completeness, and highlight the differences in $\protocolName$ to Tendermint as they arise. The two fundamental additions to the Tendermint protocol used by $\protocolName$ are proof-of-transition and slashing functionalities, described in detail in Sections \ref{sec:PoT} and \ref{sec:Slash} respectively, and included in the code of Algorithm \ref{alg:tenderstake}. Proof-of-transition ensures players who send a message at a particular height/ epoch/ step have gotten there by following the protocol, while the slashing functionality enforces the use of proofs-of-transition, as well as the sending of valid messages in general, by punishing players for sending invalid messages.

In Tenderstake, every correct player is initialised by passing a block \textit{Genesis} to the \textit{Initialise} function (line \ref{func:Initialise}). This ensures all players start from a common state. Block \textit{Genesis} contains information on player shares, stake and per-block reward at initialisation. 

The algorithm is presented as a set of $\upon$ rules that are to be executed automatically once the corresponding logical condition is \textit{TRUE}. Variables with sub-index $i$ denote player $\playeri$'s local state variables, while those without are value placeholders. The sign $*$ denotes any value. We use the convention of 
$>\frac{x}{3} \ m \  \With \ \textit{COND}$
to stand for the logical statement which is \textit{TRUE} if and only if players controlling more than $\frac{x}{3}$ of the total stake with respect to $\playeri$'s blockchain $\blockchain_i$ (represented as a vector in line \ref{line:blockchainDef}) deliver messages, with each message $m$ satisfying the logical condition \textit{COND}. If $m$ contains a proposed deviator, that deviator's share does not count towards the tally (it would be irregular that a player would affirm a message which tried to destroy their own stake). 

As the total voting power in the system is 1, this means if there are new deviators proposed in a particular valid value (line \ref{line:tab:getValidValue}) with total share $\share_{dev}$, the maximum total voting share for that value is $1-\share_{dev}$. This ensures any player $\playeri$ at height $\blockchainHeight$ with share $\share_i^{\blockchainHeight-1}$ after deciding on the block at height $\blockchainHeight-1$ can only have 0 or $\share_i^{\blockchainHeight-1}$ voting power. 
Rules ending with `for the first time' should only be executed on the first time the corresponding condition is \textit{TRUE}. 

The algorithm proceeds in epochs, with each epoch having a dedicated proposer. The mapping of epochs to proposers is known to all players, with the function $\coord(h, \ \textit{epoch})$ returning the proposer for epoch $\textit{epoch}$ given current blockchain height $h$. Player state transitions are triggered by message reception and by expiration of the timeout function $\textit{timeout}()$. Timeouts are to be called once per step during each epoch, and only trigger a transition if the player has not updated their step or epoch variable since starting the timeout function. 

In \cite{LatestGossipTendermint} it is proved that non-Byzantine players need to incorporate increasing timeouts in the number of epochs at a particular height to guarantee eventual progression. In $\protocolName$, we also incorporate increasing timeouts in epochs, but instead leave the precise definition of the timeout function $\textit{timeout}()$ to each player. We do however place the following restriction on the $\textit{timeout}()$ calculation: the value of $\textit{timeout}()$ is increasing in the number of epochs at every height, such that $ \Lim{\epoch \to\infty} \textit{timeout}(\epoch) \rightarrow \infty$.

The intuition behind this choice is leaving it sufficiently general so as to not risk choosing some specific delta/ function for delta which would expose us to unnecessary optimisation analysis, while also ensuring $\protocolName$ retains the property of increasing timeouts in the number of epochs at each height required in the original Tendermint protocol to guarantee safety and liveness.

Messages in $\protocolName$ contain one of the following tags: \\ $\Proposal, $ $ \Prevote, $ $ \Precommit$ and $\Slash$. The $\Proposal$ tag is used by the proposer of the current epoch to suggest a potential decision value (line \ref{line:tab:send-proposal}), while $ \Prevote $ and $\Precommit$ are votes for a proposed value, as in Tendermint. $\Slash$ messages identify player deviations, and are described in detail in Section \ref{sec:Slash}.

Every player $\playeri$ stores the following variables in the $\protocolName$ protocol: $ \playerStep, $ $  \ \lockedValue, $ $ \lockedEpoch, $ $ \validValue, $ $ \validEpoch, $ \\ $\playerStake , $ $ \playerShares,$ $\playerReward,$ and $\playerDeviationProofs $, initialised in lines \ref{line:variableDefStart}-\ref{line:variableDefEnd}. The $\playerStep$ tracks the current step of the protocol execution during the current epoch. The $\lockedValue$ stores the most recent value for which a  $\Precommit$ message was sent by $\playeri$ for a non-$\nil$ value, with $\lockedEpoch$ the epoch in which $\lockedValue$ was updated. As $\playeri$ can only decide on a value $\val$ if more than $ \partialSyncMajority$ voting power equivalent $\Precommit$ messages are received for $\val$, possible decision values can be any value locked by more than $ \partialSyncAdvLimit$ voting power equivalent players. Therefore any value $\val$ for which $\Proposal$ and more than $ \partialSyncMajority$ voting power equivalent $\Prevote$ messages are received in some epoch is a possible decision value. The $\validValue$ stores this value, while $\validEpoch$ stores the epoch where this update occurred. The $\playerStake$ tracks the total stake in the system, and $\playerShares$ the current player shares of $\playerStake$. The $\playerReward$ is the total reward to be distributed among all players for deciding on the next value in $\playerBlockchain$. The $\playerDeviationProofs$ vector tracks locally observed deviators as identified by $\Slash$ messages. 

\subsubsection{Proof-of-Transition Functionality}\label{sec:PoT}

In $ \protocolName$, every \\ 
$\Proposal,$ $ \Prevote $ and $ \Precommit$ message must be accompanied by a \textit{proof-of-transition} which evidences the transition to the current step claimed by each player is valid.
These proofs are stored in the local player variables $\prevoteProof $ and  $\precommitProof $, with $\prevoteProof$ also acting as proof for $\Proposal$ messages when a player is selected as proposer.  For example, in line \ref{line:tab:recvAny2/3Precommit}, each $ \Precommit$ message must be accompanied by a $\textit{proof}$ which shows that the respective players were at a protocol step which allowed them to send a $ \Precommit$ message (lines \ref{line:tab:precommit-v}, \ref{line:tab:precommit-v-1} or \ref{line:tab:precommit-nil-onTimeout}). This would be true either if the player received  $\Prevote$ messages correctly satisfying the condition at line \ref{line:nilPrevotes}, both of the conditions at lines \ref{line:tab:recGoodPrevotes}, \ref{line:tab:goodPrevotesAndStepPrevote}, or the condition at \ref{line:tab:onTimeoutPrevote} . As these conditions have specific $\Prevote$ message reception rules, and given there is only one valid $\Precommit$ messages in each case, $\textit{valid}(\textit{proof})$ is true if and only if the message was generated correctly, i.e. by receiving a set of $\Prevote$ messages which would trigger that $\Precommit$ message according to the protocol.

\subsubsection{Slashing Functionality}\label{sec:Slash}

If any messages/ proofs are not valid in $\protocolName$, players trigger the \textit{Slashing functionality} and send a $\Slash$ message (line \ref{line:deviationIdentified}), which contains a proof that the offending message was indeed invalid (described in detail in Section \ref{sec:PoD}). $\Slash$ messages, along with proofs-of-transition, are a key addition to $\protocolName$ in order to prove $\ourSMR$, as players identified as deviating by a correct player through a $\Slash$ message will eventually be seen by all correct players. When a player is identified as deviating, their proof-of-deviation is added to the local $\deviationProofs$ vector of the observing player. Then when a player is selected as proposer, and $\validValue = \nil$, they add all deviation proofs not already identified in $\playerBlockchain$ to their proposed value (line \ref{line:tab:getValidValue}). After being seen by all correct players, any deviator will eventually be added to a correct player's proposed value and removed from the protocol through the $\removeSlashers$ function (line \ref{line:removeSlashers}). 

The $\removeSlashers$ function takes as input new decided deviators, deletes their stakes (lines \ref{line:destroyDeviatorStake}, \ref{line:adjustStake}), adjusts the remaining player shares to sum to 1 (line \ref{line:adjustShares}), recalibrates the per-block reward to keep the per player reward constant throughout a $\protocolName$ instance (line \ref{line:adjustReward}), and distributes the initial reward $\rewardInitial$ times the initial shares of the deviating players among the remaining players in proportion to their stake (line \ref{line:splitDeviatorPot}).
\subsubsection{Proof-of-Deviation}\label{sec:PoD}

Crucial to the slashing functionality are the \textit{proofs-of-deviation} which can be generated upon the reception of any invalid message. As invalid messages can take various forms, we explicitly define each form of invalid message and how to generate the corresponding proof-of-deviation. Invalid messages in $\protocolName$ can (1) contradict another message from the same sender, (2) propose invalid values, (3) contain an invalid proof-of-deviation or (4) contain an invalid proof-of-transition. 

\begin{remark}
    Any message which does not contain one or more of the deviations outlined in this section is \textit{valid}.
\end{remark}

We do not encode proofs-of-deviation here as their exact implementations are beyond the scope of the paper. However, we describe the maximum amount of information necessary to prove that a player has deviated, which can then be represented in some, possibly condensed form within a $\Slash$ message to prove a player has deviated. In the following we assume a player $\playeri$ performs the corresponding deviation.

\begin{enumerate}
    
    \item Contradictory messages: If a player sends two messages $m$ and $m'$ such that they both contain valid proofs-of-transition, but it is not possible to transition from either of the messages to the other, these messages together constitute a proof of deviation. For example, if there are two messages $m$ and $m'$ from $\playeri$ with the same $\playerHeight$, $\playerEpoch$ and $\playerStep$ tags, or if $\playeri$ proposes a newly generated $\textit{getValue}()$ after sending a $\Precommit$ message in a preceeding epoch at the same height for a different value (which would mean $\validValue \neq \nil$). 
    
    \item Invalid proposed values: As the blockchain value validity predicate is shared by all parties and known a priori, any message from $\playeri$ containing an invalid proposed value $\val$ can be used as a proof of deviation. 
    
    \item Invalid slashing: A slash message is invalid if the accompanying proof-of-deviation is not valid. If the proof-of-deviation is not valid, the corresponding slash message/ proposed value (if it first appears in a proposed value) stands as a proof of deviation.
    
    \item Invalid proof-of-transition: If a message $m$ is received from a player $\playeri$, where $\playeri$ has not attached (a proof of) messages with tag, height, epoch and value variables which validly trigger the logical conditions necessary to send $m$, this constitutes an invalid proof-of-transition. These messages, or lack thereof, constitute a proof-of-deviation. As $\playeri$ signs $m$, the contents of $m$ can be verified, and as such $\playeri$'s attempted proof-of-transition can be proved to belong to $\playeri$ (as the signature must correspond to $m$ and the contained proof-of-transition), and proved to be invalid by all players. Any  message sent by $\playeri$ which does not adhere to one of the protocol-specified $\Broadcast$ formats\footnote{Can be thought of as junk, but also includes attempted communication between players, such as to coordinate collusion.} is considered to contain an invalid proof-of-transition. This is because there is no protocol-specified transition that would create such a message.
 
\end{enumerate}

\subsubsection{Life-Cycle of an Epoch}

Every epoch starts by a proposer suggesting a value in a $\protocolName$ message (line \ref{line:tab:send-proposal}). If $\validValue = \nil$, this proposed value is generated by the external $\textit{getValue}()$ function (line \ref{line:tab:getValidValue}), as in Tendermint. In $\protocolName$, players also include in their newly generated propose values any deviation proofs they have received that are not currently in $\blockchain_i$. Otherwise if $\validValue \neq \nil$, the proposer proposes $\validValue$. The proposer attaches $\validEpoch$ to the message so other processes are informed of the last epoch in which the proposer observed $\validValue$ as a possible decision value.

Upon receiving a valid $\langle\Proposal,\playerHeight,\playerEpoch, $ $ \val,$ $ \proposerValidEpoch,$ $\proofProposal \rangle$ message, a correct player $\playeri$ accepts the proposed value $\val$ if both the external function $\textit{valid}(\val)$ returns $\textit{TRUE}$ and either $\playeri$ has not locked any value ($\lockedEpoch= -1$) or $\playeri$ has locked on $\val$ (line \ref{line:tab:accept-proposal-2}). For a valid proposed value $\val$ with $\proposerValidEpoch \geq 0$, if $\proposerValidEpoch > \validEpoch$ (the proposed value was more recent than $\playeri$'s locked value) or $\lockedValue= \val$, $\playeri$ will accept $\val$ (line \ref{line:tab:cond-prevote-higher-proposal}). Otherwise, $\playeri$ rejects the proposal by sending a $\Prevote$ message for $\nil$. $\playeri$ will also send a $\Prevote$ message for $\nil$ if the timeout triggered in line \ref{line:timeoutPropose} expires and they have not sent a $\Prevote$ message for any other value during this epoch yet (line \ref{line:tab:prevote-nil-on-timeout}). 

If a correct player $\playeri$ receives a $\Proposal$ message for a valid value $\val$ and $\Prevote$ messages for $\val$ from players controlling more than $\partialSyncMajority$ of the share as described by $\val$, then it sends a $\Precommit$ message for $\val$. Otherwise, they send a $\Precommit$ message for $\nil$. A correct process will also send a $\Precommit$ message for $\nil$ if the timeout triggered in line \ref{line:tab:timeoutPrevote} expires and they have not sent a $\Precommit$ message for their current epoch yet (line \ref{line:tab:precommit-nil-onTimeout}). A correct player decides on a value $\val$ if it receives in some epoch $\epoch$ a $\Proposal$ message for $\val$ and $\Precommit$ messages for $\val$ from players controlling more than $\partialSyncMajority$ of the share as described by $\val$. On a decision, $\val$, including proof of the $\Precommit$ messages allowing $\playeri$ to decide on $\val$, are appended to $\playerBlockchain$ (line \ref{line:addBlock}). Otherwise, to ensure progression, if the timeout triggered at line \ref{line:timeoutPrecommit} expires, the player proceeds to the next epoch (line \ref{line:tab:nextEpoch}).

\begin{figure}

\centering
\includegraphics[width=0.45\textwidth]{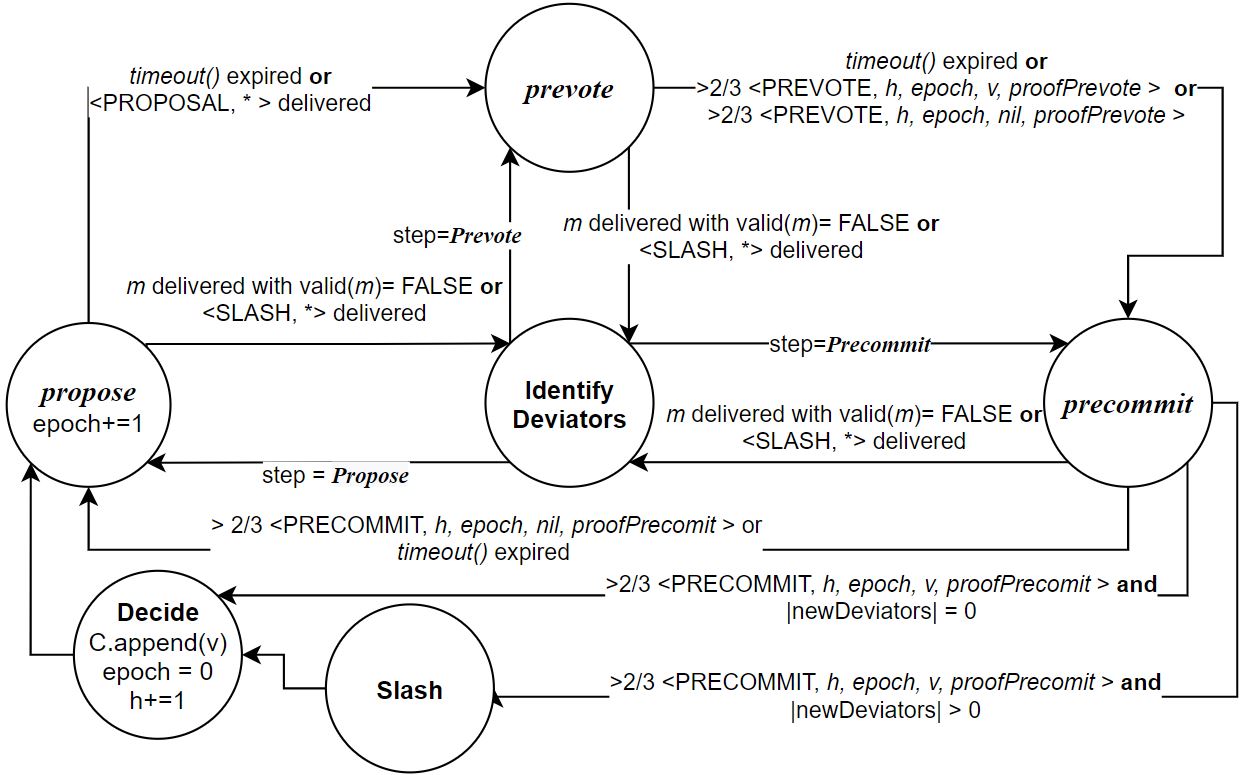}
\caption{A state diagram representation of $\protocolName$}
\end{figure}
\section{Proving \texorpdfstring{$\protocolName$}{TEXT} achieves \texorpdfstring{$\ourSMR$}{TEXT}}\label{sec:ProtocolProofs}

In this section we prove that $\protocolName$ achieves SMR in the $\ourBAR$ model. To this end, we first prove that it is an SMR protocol when more than $\partialSyncMajority$ of the share is controlled by honest players at all times.

\begin{lemma}\label{lem:noHonestInValues}
    It is not possible to generate a valid deviation proof for an honest player.
\end{lemma}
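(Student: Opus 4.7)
The plan is to prove the contrapositive: any valid deviation proof, by construction, requires an invalid message attributable via signature to the accused player, and I will show that an honest player never produces such a message. The key tool is the threat model assumption that signatures cannot be existentially forged, which means any message cited in a deviation proof against a player $\playeri$ must have genuinely been sent by $\playeri$. Combined with the fact that an honest player follows $\strategyProtocol$ exactly, and the deviation categories in Section \ref{sec:PoD} are defined precisely as violations of transitions that $\strategyProtocol$ prescribes, the argument reduces to a case analysis over the four types of invalid messages.

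First I would fix an honest player $\playeri$ and an arbitrary candidate proof-of-deviation $\pi$ for $\playeri$, and argue that $\pi$ must consist of messages signed by $\playeri$ (any component not signed by $\playeri$ would witness a deviation by its actual sender, not by $\playeri$). Since $\playeri$ follows the protocol, every message $m$ that $\playeri$ signs and broadcasts is generated by one of the lines in Algorithm \ref{alg:tenderstake} whose triggering $\upon$-condition was observed to hold, with $\playeri$'s local variables updated exactly as the protocol dictates.

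I would then dispatch the four categories in turn. For contradictory messages, the protocol only ever produces one message per $(\playerHeight, \playerEpoch, \playerStep)$ triple, and the $\lockedValue$/$\validValue$ bookkeeping in lines \ref{line:tab:setLockedValue}--\ref{line:tab:setValidValue} plus the $\validValue$-preference in line \ref{line:tab:isThereLockedValue} prevent an honest proposer from ever re-proposing a fresh $\textit{getValue}()$ after precommitting to something else at the same height; hence no two messages of $\playeri$ can fail the mutual-transition test. For invalid proposed values, the external $\textit{valid}()$ predicate is shared and known a priori, and $\playeri$ only proposes values returned by $\textit{getValue}()$ or a previously locked $\validValue$, which are by construction valid. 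For invalid slashing, an honest $\playeri$ broadcasts a $\Slash$ message only inside the $\upon$ rule at line \ref{line:deviatingStart}, where the accompanying proof is precisely the offending message together with the conditions that make it invalid; equivalently, $\playeri$ echoes an incoming $\Slash$ only after checking $\textit{valid}(\textit{proof})$. For invalid proof-of-transition, the proof attached to every $\Proposal$/$\Prevote$/$\Precommit$ message of $\playeri$ is assembled from the exact message set that triggered the corresponding $\upon$ rule (see lines \ref{line:prevoteEx}, \ref{line:precommitEx}, and the analogous assignments), so it satisfies $\textit{valid}(\textit{proof})$ by construction. Finally, because honest $\playeri$ never transmits anything outside the protocol-specified $\Broadcast$ formats, no ``junk'' message exists to use as an invalid-transition witness.

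The main obstacle is making the case analysis genuinely exhaustive, since the deviation categories are defined informally in Section \ref{sec:PoD}. I would handle this by tying each category back to a specific line of Algorithm \ref{alg:tenderstake} that an honest player executes, so that ``following the protocol'' translates into ``the negation of the invalidity condition.'' Given that, the lemma follows: there is no message signed by an honest $\playeri$ that falls into any of the four invalid categories, so no $\pi$ can be assembled with $\textit{valid}(\pi)=\textit{TRUE}$ against $\playeri$.
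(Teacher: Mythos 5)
Your proposal is correct and follows essentially the same route as the paper: unforgeability of signatures pins every cited message to the accused player, and an honest player by definition only ever emits protocol-generated (hence valid) messages, so no category of proof-of-deviation from Section \ref{sec:PoD} can apply; your per-category case analysis just makes explicit what the paper asserts in one line. The only ingredient the paper states that you leave implicit is the static-adversary assumption (Remark \ref{rem:StaticAdv}), which guarantees a currently honest player was never previously Byzantine and so has no old invalid signed messages floating around.
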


\begin{proof}
    A valid deviation proof for some player $\playeri$ must identify a message from $\playeri$ that is invalid according to one of the methods listed in Section \ref{sec:PoD}, which covers all possible message deviations. By definition, honest players follow all protocol rules and only send valid messages. We also know that honest player messages cannot be forged under our threat model assumption regarding unforgeable signatures from Section \ref{sec:ThreatModel}. Furthermore, as $\adversary$ is static, every message signed by a currently honest player must have been generated honestly (by that same player) at some point in the protocol. Therefore, all honest player messages are valid, and as such, no valid proof-of-deviation described in Section \ref{sec:PoD} can be generated for honest players. 
\end{proof}

\begin{lemma}\label{lem:AchieveSMR}
    $\protocolName$ achieves SMR when players controlling more than $\partialSyncMajority$ of the stake are honest.
\end{lemma}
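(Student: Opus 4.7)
The plan is to reduce this to the known safety and liveness guarantees of Tendermint \cite{LatestGossipTendermint} under partial synchrony with more than $\partialSyncMajority$ honest stake, and then argue that the additional Tenderstake machinery (proofs-of-transition, slashing, and $\removeSlashers$) does not compromise these properties. The skeleton of the propose/prevote/precommit state transitions in Algorithm \ref{alg:tenderstake} is syntactically identical to Tendermint, so the high-level structure of the Tendermint safety and liveness arguments transfers; what needs care is verifying that the Tenderstake-specific additions neither disable correct behaviour of honest players nor allow new attacks that Tendermint's proofs did not rule out.

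First I would establish an invariant that the honest share stays strictly above $\partialSyncMajority$ throughout the execution. By Lemma \ref{lem:noHonestInValues} no valid proof-of-deviation can be generated against an honest player, so no honest player is ever included in the $newDeviators$ set on line \ref{line:conditionNewDevs} of any block that passes $\textit{valid}(\val)$. Hence $\removeSlashers$ only ever zeroes Byzantine shares, after which line \ref{line:adjustShares} renormalises, so the honest-to-total-stake ratio is non-decreasing and remains above $\partialSyncMajority$. I would also observe that because $\removeSlashers$ is triggered deterministically from the contents of a decided block (line \ref{line:tab:onDecideRule}), any two honest players who decide on the same block perform identical share/stake/reward updates, preserving state equality across honest replicas.

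Next I would argue safety. Two honest players deciding different values $\val \neq \val'$ at the same $\playerHeight$ would require, in two (possibly distinct) epochs, two sets of precommits each carrying more than $\partialSyncMajority$ of the voting share, both sets carrying valid proofs-of-transition. The proof-of-transition requirement only tightens what Byzantines can send; it cannot prevent an honest player from issuing a well-formed precommit because honest players always construct and attach $\precommitProof$ on lines \ref{line:tab:precommit-v}, \ref{line:tab:precommit-v-1}, \ref{line:tab:precommit-nil-onTimeout}. Thus any valid precommit set is as in Tendermint, and the standard quorum-intersection argument, combined with the $\lockedValue/\lockedEpoch$ locking discipline on lines \ref{line:tab:setLockedValue}-\ref{line:tab:setLockedEpoch} and the $\validEpoch$ gating at line \ref{line:tab:cond-prevote-higher-proposal}, yields a contradiction under the invariant that Byzantine share is below $\partialSyncAdvLimit$.

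For liveness I would use that $\textit{timeout}(\epoch) \to \infty$ as $\epoch \to \infty$, so after $\GST$ some epoch will have a timeout exceeding $\Delta$, and by the $\coord$ rotation some such epoch will have an honest proposer $\playeri$. Honest players that fall behind are pulled forward by the epoch-skip rule on line \ref{line:tab:skipEpochs}, so within finitely many epochs all honest players are synchronised in the same epoch with enough slack to exchange proposals, prevotes, and precommits; the honest $>\partialSyncMajority$ share then drives a decision, appending a block on line \ref{line:addBlock} and incrementing $\playerHeight$. The main obstacle I anticipate is handling the interaction between the dynamic share adjustments inside $\removeSlashers$ and the quorum thresholds: I would need to show that because the validity predicate discounts any share belonging to a deviator proposed inside $\val$ itself, honest players consistently compute identical quorum denominators in every epoch, so that ``more than $\partialSyncMajority$'' is unambiguous across replicas and the Tendermint-style intersection arguments go through unchanged.
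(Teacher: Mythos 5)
Your proposal is correct and follows essentially the same route as the paper's (sketch) proof: both reduce to Tendermint's safety and liveness arguments, both invoke Lemma \ref{lem:noHonestInValues} to show the honest share above $\partialSyncMajority$ is preserved through $\removeSlashers$, and both use the eventually-synchronous epoch with an honest proposer and sufficiently large timeout for liveness. Your version is somewhat more explicit than the paper's — in particular in spelling out the quorum-intersection safety argument and the need for honest replicas to agree on quorum denominators — but these are elaborations of the same argument rather than a different approach.
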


\begin{proof}
    (\textit{Sketch}) We outline a proof demonstrating  that proposed values (line \ref{line:tab:getValidValue}) satisfy safety and liveness in $\protocolName$. An initialisation of $\protocolName$ is equivalent to a standard Tendermint initialisation, in addition to the proof-of-transition and slash functionalities as described in Sections \ref{sec:PoT} and \ref{sec:Slash} respectively.
    For deciding on a value at a particular height $\blockchainHeight>1$, voting share is described by the value decided at height $\blockchainHeight-1$, or the current proposed value (line \ref{line:tab:getValidValue}) if it is valid and contains newly identified deviators. From Lemma \ref{lem:noHonestInValues}, we know no valid proof of deviation can be generated for an honest player. Therefore, honest players can never be included as prospective deviators in valid proposed values in line \ref{line:tab:getValidValue}. This ensures that any valid value proposed will maintain honest voting share of more than $\partialSyncMajority$. 
    
    Proof-of-transition values are simply additional pieces of information attached to standard Tendermint messages. Identically to Tendermint, $\protocolName$ does not consider invalid messages for any of the steps needed to decide on a block (lines \ref{line:tab:accept-proposal-2}, \ref{line:tab:recGoodPrevotes},  \ref{line:tab:onDecideRule}, \ref{line:tab:acceptProposal}). Consider an epoch during synchrony, with timeouts larger than the message delivery delta $\Delta$ for all honest players (necessary to ensure liveness, as in Tendermint) and an honest proposer. This epoch occurs eventually at every height (if no decision has been reached in earlier epochs) as the network communication model and proposer rotation are identical to Tendermint, and the timeout function is increasing and unbounded in the number of epochs. As more than $\partialSyncMajority$ of the voting share is controlled by honest players at all times, honest players will decide on the proposed value during that epoch. This holds for any height $\blockchainHeight>1$, and the safety and liveness of $\protocolName$ follows.
\end{proof}

With $\protocolName$ as an SMR protocol under an honest majority, we now need to prove $\protocolName$ achieves $\ourSMR$. To do this, we will prove that $\protocolName$ is SINCE and fair in the $\ourBAR$ model, and apply Theorem \ref{thm:fair+SICIE=Necessary+Sufficient}. To prove SINCE, we first need some results that bound the reward a player can achieve for deciding on a value. As each decided value requires an accompanying $> \partialSyncMajority$ $\Precommit$ messages to be valid, the maximum amount of values a player can attempt to decide on at once is 1, as the proceeding value will need to point to the decided value and the value's $> \partialSyncMajority$ $\Precommit$ messages. By bounding the reward a player can get for deciding on values and deviators (the only rewarding actions) sufficiently low, we are able to prove that this reward is negligible compared to the potential punishment for being caught, thus preventing rational players from sending invalid messages.

\begin{lemma} \label{lem:boundedDeviatorPot}
    In any instance of $\protocolName$, $\playeri$ receives less than $\sharei^{\initialisationHeight} \rewardInitial$ in total for identifying deviators from the \textit{adjustForSlashing} function.
\end{lemma}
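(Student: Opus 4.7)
The plan is to compute the gain in $\playeri$'s stake from each invocation of the \textit{adjustForSlashing} function in closed form, sum across all invocations, and bound the result by observing that each player can be slashed at most once.

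Index the invocations chronologically by $k = 1, 2, \ldots$, and let $\shareSet^{(k)}_i$, $\stake^{(k)}$, $\reward^{(k)}$ denote the values of $\playerShares[i]$, $\playerStake$, $\playerReward$ immediately before the $k$-th invocation. Write $\slashedShare_k$ for the slashed share at that invocation and $S_k = \sum_{j \in \textit{newDeviators}_k} \textit{Genesis.shares}()[j]$. A straightforward induction on $k$ using lines \ref{line:adjustShares} and \ref{line:adjustReward} gives, for every $k$ before $\playeri$ is itself slashed,
\begin{equation}
\shareSet^{(k)}_i = \frac{\sharei^{\initialisationHeight}}{\prod_{\ell < k}(1-\slashedShare_\ell)}, \qquad \reward^{(k)} = \rewardInitial \cdot \prod_{\ell < k}(1-\slashedShare_\ell),
\end{equation}
so that $\shareSet^{(k)}_i \cdot \reward^{(k)} = \sharei^{\initialisationHeight} \cdot \rewardInitial$. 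This telescoping identity is the key observation.

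Next I would compute the gain $\Delta_k = \shareSet^{(k+1)}_i \cdot \stake^{(k+1)} - \shareSet^{(k)}_i \cdot \stake^{(k)}$ by unwinding lines \ref{line:destroyDeviatorStake}--\ref{line:splitDeviatorPot}. For $i \notin \textit{newDeviators}_k$ the $(1-\slashedShare_k)$ factors from lines \ref{line:adjustShares}, \ref{line:adjustReward} and \ref{line:adjustStake} cancel cleanly, leaving $\Delta_k = \shareSet^{(k)}_i \cdot S_k \cdot \reward^{(k)} = \sharei^{\initialisationHeight} \cdot \rewardInitial \cdot S_k$. If instead $i \in \textit{newDeviators}_k$, then line \ref{line:destroyDeviatorStake} forces $\shareSet^{(k+1)}_i = 0$, making $\Delta_k \leq 0$ and all subsequent $\Delta_\ell = 0$; such invocations contribute nothing positive to $\playeri$'s reward and can be discarded when producing an upper bound.

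Summing over invocations $k$ preceding any slashing of $\playeri$, the total reward is at most $\sharei^{\initialisationHeight} \cdot \rewardInitial \cdot \sum_k S_k$. Since each player $j \neq i$ appears in at most one $\textit{newDeviators}_k$ (their share is zeroed on the first such occurrence) and $\playeri$ never appears in these terms, $\sum_k S_k \leq 1 - \sharei^{\initialisationHeight} < 1$, which yields the desired strict inequality. The main subtlety I expect is keeping track of the ordering in lines \ref{line:adjustShares}--\ref{line:splitDeviatorPot}, in particular that $\playerReward$ has already been scaled by $(1-\slashedShare_k)$ by the time it is used in the slashing bonus on line \ref{line:splitDeviatorPot}; once this is tracked correctly, the telescoping and the rest of the argument are routine.
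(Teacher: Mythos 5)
Your proof is correct and follows essentially the same route as the paper's: both rest on the telescoping identity $\shareSet^{(k)}_i\cdot\reward^{(k)}=\sharei^{\initialisationHeight}\cdot\rewardInitial$ obtained from lines \ref{line:adjustShares} and \ref{line:adjustReward}, reduce the per-invocation bonus to $\sharei^{\initialisationHeight}\rewardInitial\cdot S_k$, and conclude by noting each player is slashed at most once so $\sum_k S_k<1$. Your version is marginally more careful than the paper's in two respects — it explicitly handles the case where $\playeri$ is itself slashed, and it justifies the strictness of $\sum_k S_k<1$ by excluding $\playeri$ from the deviator sets rather than asserting a strict subset — but the underlying argument is the same.
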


\begin{proof}
    We can see that the rewards for deciding on new deviators at height $\blockchainHeight$ are distributed at line \ref{line:splitDeviatorPot}. We will prove that the sum of the rewards distributed by calling line \ref{line:splitDeviatorPot} throughout an instance of $\protocolName$ are less than $\sharei^{\initialisationHeight} \rewardInitial$.

    Let there be a set of players $\textit{newDevs}^\blockchainHeight$ controlling $\slashedShare$ at height $\blockchainHeight-1$ identified as deviating for the first time in the value at height $\blockchainHeight$. 
    In the \textit{adjustForSlashing} function, this results in $\playeri$'s share being updated according to line \ref{line:adjustShares}, which implies $\shareHi =\frac{\sharei^{\blockchainHeight-1}}{1-\slashedShare}$. Furthermore, letting $\reward^{\blockchainHeight-1}$ be the total reward after deciding on a value at height $\blockchainHeight-1$, the new total reward for height $\blockchainHeight$ is $\reward^\blockchainHeight= (1-\slashedShare) \reward^{\blockchainHeight-1}  $ (line \ref{line:adjustReward}), while the total stake before distributing rewards for height $\blockchainHeight$ is adjusted to $(1-\slashedShare) \stake^{\blockchainHeight-1} $ (line \ref{line:adjustStake}), preserving the total stake of non-deviators. We then have to add the slash bonus of $\sum_{j \in \textit{newDevs}^\blockchainHeight} $ $\textit{Genesis.shares}()[j] \reward^\blockchainHeight$  (line \ref{line:splitDeviatorPot}).

    First observe that:  
    \begin{equation}
    \begin{split}
         \sharei^{\blockchainHeight} \reward^{\blockchainHeight} &= \frac{\sharei^{\blockchainHeight-1}}{1-\slashedShare} (1-\slashedShare) \reward^{\blockchainHeight-1}  \\ 
        &= \sharei^{\blockchainHeight-1}  \reward^{\blockchainHeight-1}.
    \end{split}
    \end{equation}
    
    Secondly, notice that when no new deviators are identified, and \textit{adjustForSlashing} is not called, both $\sharei$ and $\reward_i$ are unchanged from the previous height, as they are only adjusted in \textit{adjustForSlashing}. This means: 
    \begin{equation} \label{eq:ConstantReward}
        \sharei^{\blockchainHeight} \reward^{\blockchainHeight}
        = \sharei^\initialisationHeight \reward^{\initialisationHeight}
        =\sharei^\initialisationHeight \rewardInitial, \ \forall \ \blockchainHeight \geq 1.
    \end{equation}
    
    We know for a set of new deviators $\textit{newDevs}^\blockchainHeight$ at height $\blockchainHeight$, $\playeri$ receives $\sharei^{\blockchainHeight} \reward^{\blockchainHeight} $ $ \sum_{j \in {\textit{newDevs}^\blockchainHeight}} $ $ \textit{Genesis.shares}()[j] $(line \ref{line:splitDeviatorPot}). 
    Furthermore, from Equation \ref{eq:ConstantReward}, we have that  $\sharei^{\blockchainHeight} \reward^{\blockchainHeight}= \sharei^\initialisationHeight $ $ \rewardInitial$ for all $\blockchainHeight \geq 1$. 
    This implies $\playeri$ receives an identifying deviator bonus from \textit{adjustForSlashing} of $\sharei^\initialisationHeight   \rewardInitial \cdot $ $\sum_{j \in {\textit{newDevs}^\blockchainHeight}}$ $ \textit{Genesis.shares}()[j]$ at height $\blockchainHeight$. 
    Summing over all heights up to and including $\blockchainHeight$ gives a total reward of $\sharei^\initialisationHeight $ \\ $  \rewardInitial \cdot \sum^\blockchainHeight_{k=1}  $ $\big( \sum_{j \in {\textit{newDevs}^k}} $ $\textit{Genesis.shares}()[j] \big)$ for identifying deviators through \textit{adjustForSlashing}. 
    
    As $\cup_{1 \leq k\leq \blockchainHeight} {\textit{newDevs}^k} \subset \{1, ..., \numPlayers\}$ for all $\blockchainHeight > 1$, it must be that $\sum^\blockchainHeight_{k=1} \big( \sum_{j \in {\textit{newDevs}^k}} $ $\textit{Genesis.shares}()[j] \big) < 1$ for all $\blockchainHeight > 1$.
    This means the total reward for identifying deviators in $\protocolName$ through the \textit{adjustForSlashing} function is less than $\sharei^\initialisationHeight  \rewardInitial$, as required. 
\end{proof}

\begin{lemma}\label{lem:constantReward}
    In addition to any rewards from the \textit{adjustForSlashing} function, $\playeri$ receives $\sharei^{\initialisationHeight} \rewardInitial$ for every decided value in $\protocolName$.
\end{lemma}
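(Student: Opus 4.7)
The plan is to reduce Lemma \ref{lem:constantReward} to the stake-share invariant $\shareHi \cdot \reward^\blockchainHeight = \sharei^\initialisationHeight \rewardInitial$ already established during the proof of Lemma \ref{lem:boundedDeviatorPot}, with most of the work amounting to careful bookkeeping of the places where $\playerStake$ is modified in Algorithm \ref{alg:tenderstake}.

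First I would inspect every line that updates $\playerStake$ outside of initialisation. These are: line \ref{line:addReward}, which is executed exactly once per decided value, and lines \ref{line:adjustStake} and \ref{line:splitDeviatorPot}, both of which sit strictly inside the \textit{adjustForSlashing} function. Since the statement of the lemma explicitly excludes rewards coming from \textit{adjustForSlashing}, the only contribution that needs to be counted is the increment on line \ref{line:addReward}, which adds $\reward^\blockchainHeight$ to the total stake $\stakeH$ and therefore contributes exactly $\shareHi \cdot \reward^\blockchainHeight$ to $\playeri$'s personal holding $\shareHi \cdot \stakeH$ on every decision.

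Next I would invoke Equation \ref{eq:ConstantReward} from the proof of Lemma \ref{lem:boundedDeviatorPot}, which asserts $\shareHi \cdot \reward^\blockchainHeight = \sharei^\initialisationHeight \cdot \rewardInitial$ for every height $\blockchainHeight \geq \initialisationHeight$. The invariant holds because any scaling of $\sharei$ by $1/(1-\slashedShare)$ on line \ref{line:adjustShares} is matched by the compensating scaling of $\reward$ by $(1-\slashedShare)$ on line \ref{line:adjustReward}, and line \ref{line:addReward} itself leaves both $\sharei$ and $\reward$ untouched. Combining this with the previous observation immediately yields the advertised per-block reward of $\sharei^\initialisationHeight \rewardInitial$.

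I do not anticipate any substantive obstacle: the lemma essentially restates in per-block-reward language the invariant that already did the heavy lifting in Lemma \ref{lem:boundedDeviatorPot}. The only real care needed is to keep the two reward streams cleanly separated so that the qualifier ``in addition to any rewards from the \textit{adjustForSlashing} function'' is handled without double-counting the slash bonus distributed on line \ref{line:splitDeviatorPot}.
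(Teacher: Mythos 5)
Your proposal is correct and follows essentially the same route as the paper's proof: identify line \ref{line:addReward} as the sole reward distribution outside \textit{adjustForSlashing}, note that $\playeri$'s contribution from it is $\shareHi \cdot \reward^\blockchainHeight$, and conclude via the invariant of Equation \ref{eq:ConstantReward} established in the proof of Lemma \ref{lem:boundedDeviatorPot}. The extra bookkeeping you do (re-deriving why the invariant holds from the compensating scalings on lines \ref{line:adjustShares} and \ref{line:adjustReward}) is already contained in that earlier proof, so no new ideas are needed.
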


\begin{proof}
    The only reward received by $\playeri$ not in \textit{adjustForSlashing} is distributed at line \ref{line:addReward}. Letting $\reward^{\blockchainHeight}$ be the total reward distributed at line \ref{line:addReward} for height $\blockchainHeight$, $\playeri$ receives $\sharei^{\blockchainHeight} \reward^{\blockchainHeight}$. We have already seen in Equation \ref{eq:ConstantReward} that $\sharei^{\blockchainHeight} \reward^{\blockchainHeight}= \sharei^\initialisationHeight  \rewardInitial$, for all $\blockchainHeight \geq 1$, which is the required result.

\end{proof}

\begin{remark}
    In $\protocolName$, share increases are counteracted by reward decreases to keep per-decision rewards constant (Lemma \ref{lem:constantReward}). This avoids a common, critical, mistake in incentive compatible reward mechanisms where early share increases permanently increase the size of per-decision rewards a player receives.
\end{remark}

\begin{lemma}\label{lem:SINCE}
    $\protocolName$ is SINCE in the ByRa model.
\end{lemma}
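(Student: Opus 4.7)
The plan is to establish, for an oblivious rational $\playeri$, that $\strategyProtocol \strictlyDominates \strategyi$ for every $\strategyi \in \strategySeti$ with $\strategyi \notequiv \strategyProtocol$. Since obliviousness lets $\playeri$ compute expected utilities assuming every other player follows $\strategyProtocol$, Lemma \ref{lem:AchieveSMR} applies and I can reason as if $\protocolName$ already satisfies safety and liveness among those other players. I would then partition the non-equivalent strategies into two classes according to whether $\strategyi$ almost surely emits only valid messages or instead emits some message matching one of the four proof-of-deviation templates of Section \ref{sec:PoD} with non-negligible probability over a polynomial horizon.

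For the invalid-message class, once $\GST$ has passed every other rational player delivers the offending message within $\Delta$ rounds and, executing $\strategyProtocol$, constructs the appropriate $\Slash$ message (line \ref{line:deviationIdentified}). The next rational proposer whose value is generated via \textit{getValue}() then includes $\playeri$ in $\val.\textit{deviators}()$ (line \ref{line:tab:getValidValue}); by Lemma \ref{lem:AchieveSMR} this value is decided with probability at least $\approxOne$ over any polynomial horizon, and line \ref{line:destroyDeviatorStake} permanently zeroes $\playeri$'s share. Hence $\expectedUtilityiRound(\typeir, \strategyi) \to 0$ once $\roundprime$ grows past the expected slashing event, while Lemma \ref{lem:constantReward} gives $\expectedUtilityiRound(\typeir, \strategyProtocol) \geq \sharei^\initialisationHeight \stake^\initialisationHeight > 0$, producing the required strict gap for all sufficiently large $\roundprime$.

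For the valid-message class, I would argue that the proof-of-transition attached to every $\Proposal$, $\Prevote$, and $\Precommit$ message fixes the message contents uniquely given the justifying message set, and that the contradictory-message rule of Section \ref{sec:PoD}(1) forbids sending two valid messages at the same $(\playerHeight, \playerEpoch, \playerStep)$. Any valid deviation from $\strategyProtocol$ therefore reduces to pure withholding. By Lemma \ref{lem:constantReward} the per-decision reward equals $\sharei^\initialisationHeight \rewardInitial$ and by Lemma \ref{lem:boundedDeviatorPot} the cumulative slash-bonus reward is bounded by $\sharei^\initialisationHeight \rewardInitial$, so the only way to affect $\expectedUtilityiRound$ is to change the number of decisions reached by round $\roundprime$. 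Under the oblivious assumption, withholding by $\playeri$ can only delay or skip epochs in which $\playeri$'s contribution is required, never accelerate them, so $\strategyProtocol$ yields at least as many decisions and strictly more on epochs where $\playeri$'s vote was pivotal.

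The main obstacle will be the strict-inequality side of the withholding argument when $\sharei$ is small, since a single missing $\Prevote$ or $\Precommit$ may only marginally slow one epoch; the argument must aggregate the expected delays across a horizon of $\poly$ rounds and invoke the increasing-timeout schedule $\timeout()$ to guarantee an expected loss of at least one extra decided block. A secondary technicality is the calibration of $\rewardInitial$ relative to $\stake^\initialisationHeight$ in $\textit{Genesis}$: the slashing loss $\sharei^\initialisationHeight \stake^\initialisationHeight$ must exceed any reward $\playeri$ could have accumulated between deviating and being detected, which I would state explicitly as a precondition on $\textit{Genesis}$ alongside the assumptions of Section \ref{sec:ThreatModel}.
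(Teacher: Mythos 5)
Your first step (invalid messages are eventually slashed, the once-off gain is bounded by Lemmas \ref{lem:boundedDeviatorPot} and \ref{lem:constantReward}, and the permanent loss of all future per-block rewards dominates) matches the paper's argument. But your treatment of the ``valid-message class'' has a genuine gap: you claim that proof-of-transition pins down the unique valid message at each $(\playerHeight,\playerEpoch,\playerStep)$, so any valid deviation ``reduces to pure withholding.'' That is false in $\protocolName$, because the timeout function is itself a strategic choice --- the protocol deliberately leaves $\textit{timeout}()$ to each player, only requiring it to be increasing and unbounded in epochs. A player who times out immediately sends a $\nil$ $\Prevote$ via line \ref{line:tab:onTimeoutPropose} with a perfectly valid proof-of-transition; this is neither an invalid message nor withholding, yet it forces $\nil$ precommits and blocks every decision. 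Symmetrically, a player could obey a bounded timeout forever. The paper spends the bulk of its proof on exactly this: showing rational players prefer some non-zero timeout (waiting strictly increases the probability of contributing to a decision), will not wait indefinitely (the synchronous subset argument with $\timeout^\round_i$), and will eventually exceed any fixed bound $\timeout_{\textit{max}}$ after enough failed epochs because $P(\timeout_{\textit{max}}<\Delta \mid \epoch\rightarrow\infty)\rightarrow 1$. Without this, SINCE does not follow from your two classes.

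Two secondary issues. First, invoking Lemma \ref{lem:AchieveSMR} at the outset because ``obliviousness lets $\playeri$ assume every other player follows $\strategyProtocol$'' is circular: obliviousness only lets $\playeri$ assume others are \emph{rational}, and that rational players follow $\strategyProtocol$ is precisely what SINCE asserts. The paper avoids this by bootstrapping --- establishing point by point (no invalid messages, then sending when possible, then timeouts) and only using previously established behaviour at each stage. Second, your proposed \textit{Genesis} precondition that the slashing loss exceed accumulated rewards is unnecessary and would weaken the lemma: the paper's comparison is between a bounded once-off deviation gain of less than $2\sharei^{\initialisationHeight}\rewardInitial$ and the unbounded stream of $\sharei^{\initialisationHeight}\rewardInitial$ per decided block forfeited upon removal, which already yields strict dominance for all sufficiently large $\roundprime$ without any calibration assumption.
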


\begin{proof}
    To prove SINCE in the ByRa model, we require that every protocol action strictly dominates all other possible actions in expectation for rational players assuming all other players are rational. We do this by proving the following:
    
    \begin{enumerate}
        \item Rational players do not send invalid messages.
        \item Rational players send valid messages when possible.
        \item Rational players obey a timeout function which is increasing and unbounded in epochs at every height. 
    \end{enumerate}
    
    Firstly, consider invalid protocol messages. As an invalid message takes one of the forms described in Section \ref{sec:PoD}, it can eventually be identified by all players and the offending player stake destroyed through the Slashing functionality (Section \ref{sec:Slash}). As identifying deviations of other players is strictly increasing in stake (line \ref{line:splitDeviatorPot}) and does not affect proceeding rewards due to Lemma \ref{lem:constantReward}, all rational players prefer to eventually identify valid deviations than not identify valid deviations. As stake is only meaningful with respect to a valid blockchain, $\playeri$ must construct $\playerBlockchain$ sequentially in it's height. Therefore, for $\playerHeight$ the height of $\playerBlockchain$, $\playeri$'s messages can only refer to a value at a height less than or equal to $\playerHeight+1$.
    
    Combining Lemmas \ref{lem:boundedDeviatorPot} and \ref{lem:constantReward}, for any height $\blockchainHeight>1$, the maximum additional reward achievable by sending an invalid message up to that height is less than $2 \sharei^{\initialisationHeight} \rewardInitial$. This is because by Lemma \ref{lem:boundedDeviatorPot}, the additional reward for identifying deviators is less than $ \sharei^{\initialisationHeight} \rewardInitial$, and by Lemma \ref{lem:constantReward}, the per-decided value reward excluding any reward for identifying deviators is $ \sharei^{\initialisationHeight} \rewardInitial$, a constant. Therefore, attempting to decide on a value and/ or deviators (the only ways to be rewarded in $\protocolName$) with an invalid message results in a payoff of less than $ 2 \sharei^{\initialisationHeight} \rewardInitial$. Due to Lemma \ref{lem:constantReward}, the rewards for proceeding value-decisions remain constant, while all rewards for identifying deviators must sum up to less than $\sharei^{\initialisationHeight} \rewardInitial$ from Lemma \ref{lem:boundedDeviatorPot}. 
    Given proofs-of-deviation can be provided at any time and all rational players will send $\Slash$ messages when possible, the cost of sending an invalid message, full destruction of stake (line \ref{line:destroyDeviatorStake}) and effective removal from the protocol, dominates these once-off and bounded potential rewards for sending an invalid message. As such, no rational player will send an invalid message. 
    
    Given no rational player will send an invalid message, we now need to check that rational players will send messages when valid messages can be sent, as per the protocol. The alternative is not sending messages. Given the arbitrary scheduling of message delivery in any distributed network where other players have unknown timeouts, and the positive reward for deciding on a block, sending messages strictly increases the expected rate of messages received by all other players. This in turn strictly increases the expected rate of player progression through the protocol, as progression can only occur when proofs can be generated. This strictly increases the expected number of blocks, and rewards, added to the blockchain. 
    
    Lastly, we must ensure that rational players obey a timeout function which tends to infinity in the number of epochs at each height. To do this we first show that rational players obey some non-zero timeout, and then that this timeout is increasing and unbounded in number of epochs.
    
    If a rational player does not wait for messages to be delivered, they will never be able to contribute to prevotes for valid values unless they are a proposer. After entering a new epoch they will call line \ref{line:timeoutPropose}, immediately followed by line \ref{line:tab:onTimeoutPropose}, sending a \textit{nil} prevote. Moreover, given they send a \textit{nil} prevote and advance to the $\prevote$ step, they will also send a \textit{nil} precommit (line \ref{line:tab:onTimeoutPrevote}) as when they receive more than $ \partialSyncMajority$ prevotes it includes their own $\nil$ prevote, triggering line \ref{line:tab:recvAny2/3Prevote} before it is possible to receive more than $ \partialSyncMajority$ prevotes for a valid value. By the same argumentation, they will never be able to decide on a value in the epoch it is proposed as they will first receive more than $ \partialSyncMajority$ precommits for inconsistent values given their \textit{nil} precommit message, triggering line \ref{line:tab:recvAny2/3Precommit} and then immediately line \ref{line:tab:onTimeoutPrecommit}, preventing a decision. Compare this to obeying some timeout for messages to be delivered. Waiting for some number of rounds strictly increases the probability of receiving valid proposed values, and sending a prevote for a valid value. This subsequently increases the probability of all players sending valid precommits. By further obeying a timeout for precommits it increases the probability of receiving the quorum of precommits needed to decide on a value. Therefore, rational players prefer to wait some number of rounds for messages to be delivered. 
    
    Now we must ensure rational players do not wait indefinitely for messages. 
    Recall that in Tenderstake, rational players are modelled as assuming for some unknown but fixed $\Delta$, they are in $\Delta$-synchrony with some subset of players.
    At any round $\round$, in order to calculate expected utility for some future round, $\playeri$ will have a private distribution of expected message delivery times from other players in synchrony\footnote{The distribution of expected message delivery times will be a function of some starting estimate at initialisation (perhaps based on a \textit{Genesis} suggested value, as in \cite{LatestGossipTendermint}), and the observed responsiveness of all other players up until round $\round$.}, and thus an expected number of decisions up until that future round. Let $\timeout^\round_i$ be such that according to $\playeri$'s private information, messages taking longer than $\timeout^\round_i$ are sent by players out of synchrony with $\playeri$ with statistical significance\footnote{This statistical significance can be with respect to a function $\negligible$, although rational players may perceive a higher utility by choosing a weaker significance level. This optimisation is unnecessary for the proof.}. 
    
    If the subset of players in synchrony with $\playeri$, including $\playeri$, do not control more than $\partialSyncMajority $ of the total stake, $\playeri$ is indifferent to timing out, as no decision is possible. 
    Otherwise, consider the subset of players in synchrony with $\playeri$, including $\playeri$, controlling more than $\partialSyncMajority $ of the total stake. As messages sent by players out of synchrony with $\playeri$ take arbitrarily long to deliver, the number of decisions that can be made by using a timeout of $\timeout^\round_i$ and transitioning to a proposer in synchrony with $\playeri$ is arbitrarily large. Furthermore, as $\playeri$ is unaware of how many players are in synchrony with $\playeri$, the probability of that subset controlling more than $\partialSyncMajority $ of the stake will be positive. This implies $\playeri$ has positive expectancy to obey such a timeout $\timeout^\round_i$. Therefore, rational players obey some timeout, and will not wait indefinitely for messages. 
    
    We finally need to show that for a rational $\playeri$ at any given height, $\playeri$ will follow increasing, unbounded timeouts in the number of epochs at every height. 
    For a maximum message delivery time of $\Delta$ rounds during synchrony, if $\playeri$ follows a timeout of $\timeouti< \Delta $, $\playeri$ is not necessarily able to contribute to deciding on a value. Assume players controlling more than $\partialSyncMajority$ of the stake are in $\Delta$-synchrony (if this is not the case, no information can be gained) and no decision has been made for some number of epochs. As players behave honestly in all non-timeout actions (points 1 and 2), the only variable which can affect the probability of deciding for this height is $\timeouti$. Assume for all rational players there is a value $\timeout_{\textit{max}}>0$, such that they choose timeouts less than $ \timeout_{\textit{max}}$ for all epochs. 

    If $\timeout_{\textit{max}}< \Delta$, it is possible that players may always timeout, sending \textit{nil} messages and not contributing to decisions. Given there has been $\epoch$ epochs of not deciding on a value, and all other actions are being followed (which we have shown to be the case), it must be that $P(\timeout_{\textit{max}}< \Delta | \epoch \rightarrow \infty) \rightarrow 1$. This implies choosing a timeout up to and including $\timeout_{\textit{max}}$ after sufficiently many epochs of no decision results in decision with $\negligible$ probability for proceeding epochs. Therefore, rational players will eventually only follow timeouts greater than $\timeout_{\textit{max}}$ if no value has been decided, for any value of $\timeout_{\textit{max}}$. 
    
    This is sufficient to say rational players follow increasing, unbounded timeouts, and as such, the recommended protocol.
\end{proof}

\begin{lemma}\label{lem:Fair}
    $\protocolName$ is fair in the ByRa model.
\end{lemma}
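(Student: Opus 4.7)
The plan is to show that $\share^\round_\adversary$ is non-increasing in blockchain height with probability greater than $\approxOne$, and therefore never exceeds $\share^\initialisationRound_\adversary$, matching Definition \ref{def:fair}.

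First, I would locate the only places where shares can be modified in Algorithm \ref{alg:tenderstake}. Shares are written inside $\textit{adjustForSlashing}$ (lines \ref{line:destroyDeviatorStake}--\ref{line:adjustShares}): non-slashed players rescale to $\sharei^{\blockchainHeight} = \sharei^{\blockchainHeight-1}/(1-\slashedShare)$ and slashed players become $0$. The per-decision reward update at line \ref{line:addReward} grows total stake but, by the constant-reward argument of Lemma \ref{lem:constantReward}, preserves every individual share. Summing the rescaling rule over the adversary for a newly-slashed set $N$ with aggregate share $\slashedShare$ of which $s_N$ belongs to adversarial players, the new adversarial share becomes
\[
\share^{\blockchainHeight}_\adversary \;=\; \frac{\share^{\blockchainHeight-1}_\adversary - s_N}{1-\slashedShare}.
\]
A short algebraic check shows $\share^{\blockchainHeight}_\adversary \leq \share^{\blockchainHeight-1}_\adversary$ exactly when $s_N = \slashedShare$, i.e.\ $N \subseteq \adversary$; otherwise the adversarial share strictly rises.

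Next I would invoke Lemma \ref{lem:SINCE}: with probability greater than $\approxOne$, every rational player follows $\strategyProtocol$ and therefore only emits valid messages. Reapplying the argument of Lemma \ref{lem:noHonestInValues} under this event, together with signature unforgeability from Section \ref{sec:ThreatModel}, no valid proof-of-deviation can be constructed against a rational player. Hence the probability that any newly-slashed set $N$ contains a rational player at a given decision is negligible in $\secParam$.

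Finally, I would perform an induction on height, conditioning at each level on the overwhelming-probability event that the newly-slashed set lies entirely in $\adversary$. Under this event the monotone chain $\share^\initialisationHeight_\adversary \geq \share^1_\adversary \geq \cdots \geq \share^{\blockchainHeight}_\adversary$ holds for all heights $\blockchainHeight$ reached in the execution, so a union bound over the polynomially many decisions gives $P(\share^\round_\adversary \leq \share^\initialisationRound_\adversary) > \approxOne$, which is precisely the statement of fairness. The main obstacle I anticipate is this union-bound step: one must verify that the number of distinct slashing events in any polynomial-length run is itself polynomial, and carefully handle mixed-composition sets $N$ inside the induction by absorbing them as negligible failure events whose probabilities do not accumulate beyond negligibility.
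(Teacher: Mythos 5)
Your proposal follows essentially the same route as the paper's (much terser) proof: by $\SICIE$ rational players only send valid messages, so no valid proof-of-deviation can be produced against them (the argument of Lemma \ref{lem:noHonestInValues}), hence every slashed set lies in $\adversary$ and the rescaling at line \ref{line:adjustShares} can only decrease the adversarial share; your added union bound over the polynomially many heights is bookkeeping the paper leaves implicit. One small correction: the rescaled share $(\share^{\blockchainHeight-1}_\adversary - s_N)/(1-\slashedShare)$ satisfies $\share^{\blockchainHeight}_\adversary \leq \share^{\blockchainHeight-1}_\adversary$ if and only if $s_N \geq \share^{\blockchainHeight-1}_\adversary \cdot \slashedShare$, not ``exactly when $s_N = \slashedShare$''; your sufficiency claim for $N \subseteq \adversary$ still holds, so the argument is unaffected.
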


\begin{proof}
    As all rational players follow the protocol, and $\share_\adversary^{\initialisationHeight}< \partialSyncAdvLimit$, no rational player decides on another rational player as deviating. Therefore, the share of stake controlled by rational players is only increasing (line \ref{line:adjustShares}), meaning the adversary's share is only decreasing, upperbounded by their starting share. This implies  $\share_\adversary^{\blockchainHeight} \leq \share_\adversary^{\initialisationHeight}$ for all $\blockchainHeight \geq \initialisationHeight$, which is precisely the definition of a fair protocol. 
\end{proof}

\begin{theorem}
    $\protocolName$ achieves ByRa SMR.
\end{theorem}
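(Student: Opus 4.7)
The plan is to apply Theorem \ref{thm:fair+SICIE=Necessary+Sufficient}, which asserts that an SMR protocol achieves $\ourSMR$ if and only if it is both strong incentive compatible in expectation and fair. Because this equivalence holds independently of network assumptions and adversarial capabilities, the remaining task is to instantiate it for $\protocolName$ by verifying each property in turn, both of which have already been isolated as separate lemmas.

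First, I would cite Lemma \ref{lem:SINCE}, which establishes that $\protocolName$ is $\SICIE$ in the $\ourBAR$ model. That lemma rules out the three relevant deviation classes: (i) sending an invalid message, where any reward upside is bounded by $2 \sharei^{\initialisationHeight} \rewardInitial$ via Lemmas \ref{lem:boundedDeviatorPot} and \ref{lem:constantReward}, while the eventual slashing punishment destroys the full stake; (ii) withholding otherwise valid messages, which strictly reduces the expected rate of decisions and therefore expected reward; and (iii) using insufficient or bounded timeouts, which is ruled out by the argument that no finite $\timeout_{\textit{max}}$ can be consistent with eventual decision under partial synchrony. Second, I would cite Lemma \ref{lem:Fair}, which gives $\share_\adversary^{\blockchainHeight} \leq \share_\adversary^{\initialisationHeight}$ with overwhelming probability, matching Definition \ref{def:fair} verbatim.

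Combining Lemmas \ref{lem:SINCE} and \ref{lem:Fair} and applying the ``if'' direction of Theorem \ref{thm:fair+SICIE=Necessary+Sufficient} immediately yields that $\protocolName$ achieves $\ourSMR$. The main obstacle here is not in this final composition, which reduces to a one-line invocation of the main equivalence, but rather in the preceding lemmas that underpin it, especially in bounding adversarial gains from invalid messages uniformly across all heights so that the slashing punishment genuinely dominates in expectation; given those ingredients stand, the theorem follows with no further work.
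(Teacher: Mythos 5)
Your proposal is correct and follows exactly the same route as the paper: the paper's proof is a one-line application of Lemma \ref{lem:SINCE} and Lemma \ref{lem:Fair} to Theorem \ref{thm:fair+SICIE=Necessary+Sufficient}. Your additional recapitulation of the content of those lemmas is accurate but not needed for this step.
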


\begin{proof}
    Follows by applying Lemma \ref{lem:SINCE} and Lemma \ref{lem:Fair} to Theorem \ref{thm:fair+SICIE=Necessary+Sufficient}.
\end{proof}
\section{Conclusion}\label{sec:conclusion}

We provide a game-theoretic framework for analysing SMR protocols. Although many previous attempts have been made, we are, to the best of our knowledge, the first to formally treat SMR protocols as games involving only rational and adversarial players. We detail the $\ourBAR$ model for player characterisation in SMR protocols, an update to the legacy BAR model, removing the dependency on altruistic players in an era of unprecedented market capitalisation of tokenised SMR protocols. We demonstrate that the properties of strong incentive compatibility in expectation and fairness as described in this paper, are both necessary, and together sufficient to achieve SMR in the $\ourBAR$ model. We then provide the $\protocolName$ protocol as an example of a protocol that achieves $\ourSMR$, which is of independent interest both as a strong incentive compatible in expectation and fair protocol in the $\ourBAR$ model, but also as a yardstick for addressing the shortcomings of current protocol guarantees in the $\ourBAR$ model. The proof techniques we use provide several methodologies with which SMR protocols can be analysed in this new game-theoretic framework.
The improvements we make to the Tendermint protocol as described in Section \ref{sec:Protocol} have immediate practical implications given the current industrial deployment of Tendermint-style protocols, such as in Cosmos\footnote{Cosmos. \url{https://cosmos.network/} Accessed: 25/05/2021}. The application of our framework to all future SMR protocol analysis and development serves as critical future work. Another important consideration for future work is that of the $\ourBAR$ model under an adaptive adversary as stated in Remark \ref{rem:StaticAdv}.

\bibliographystyle{ACM-Reference-Format}
\bibliography{references}

\appendix

\end{document}